\title{On The Center Sets and Center Numbers of Some Graph Classes }
\author{\small {Ram~Kumar R\thanks{On deputation from N.S.S.College, Cherthala, and is supported by University Grants Commission, Govt. of India under their FDP Scheme. }}\\
        \small{Department of Computer~Applications}\\
        \small{Cochin University of Science and Technology}\\
        \small{ Kochi-682022, India}\\
        \small{ e-mail:ram.k.mail@gmail.com} \and
        \small{Kannan Balakrishnan}\\
           \small{Department of Computer~Applications}\\
           \small{Cochin University of Science and Technology}\\
           \small{ Kochi-682022, India}\\
           \small{ e-mail:mullayilkannan@gmail.com}\and
           \small{Manoj Changat}\\
           \small{Department of Futures Studies}\\
           \small{University of Kerala}\\
           \small{ Kariyavattom, Trivandrum - 695581, India}\\
           \small{e-mail:mchangat@gmail.com}\and
           \small{A.Sreekumar}\\
           \small{Department of Computer~Applications}\\
           \small{Cochin University of Science and Technology}\\
            \small{ Kochi-682022, India}\\
           \small{e-mail:sreekumar@cusat.ac.in}
           \and
           \small{Prasanth G. Narasimha-Shenoi}\\
          \small{Department of Mathematics}\\
           \small{Government College, Chittur}\\
           \small{Palakkad-678104, India}\\
           \small{e-mail:prasanthgns@gmail.com}
           }
           \date{ }
\newtheorem{theorem}{Theorem}
\newtheorem{lemma}{Lemma}
\newtheorem{remark}{Remark}
\newtheorem{proposition}{Proposition}
\newtheorem{corollary}[theorem]{Corollary}
\begin{document}
\maketitle
\begin{abstract}
For a set $S$ of vertices and the vertex $v$ in a connected graph $G$, $\displaystyle\max_{x \in S}d(x,v)$ is called the $S$-eccentricity of $v$ in $G$. The set of vertices with minimum $S$-eccentricity is called the $S$-center of $G$. Any set $A$ of vertices of $G$ such that $A$ is an $S$-center for some set $S$ of vertices of $G$ is called a center set. We identify the center sets of certain classes of graphs namely, Block graphs, $K_{m,n}$, $K_n-e$, wheel graphs, odd cycles and symmetric even graphs and enumerate them for many of these graph classes. We also introduce the concept of center number which is defined as the number of distinct center sets of a graph and determine the center number of some graph classes. \\ 
\textbf{Keywords}: Center, Center Sets, Center Number, Symmetric Even Graphs, Block Graphs.
 \end{abstract}

  \section{Introduction}
 centrality is one of the fundamental notions in graph theory which has established close connection between graph theory and various other areas like Social networks, Flow networks, Facility location problems etc. The main objective of any facility location problem is to identify the  location of a facility for a community or set of customers such that the distance between the location and the community or customers is minimized.  This leads to the standard notion of graph centers, which is widely studied and still continues to be an important branch in metric graph theory. The concept of centrality has recently gained popularity  in large networks like  where the aim is to identify "important actors" and this is done using different centrality concepts such as degree, closeness and betweenness centrality ~\cite{wasserman1994social}. The center of a graph consists of those vertices with  minimum eccentricity, where eccentricity of a vertex is the maximum distance of the vertex among the set of all vertices.\\ 
The problem of finding the center of a graph has been studied by many authors since the nineteenth century beginning with the classical result due to Jordan \cite{jordan1869assemblages} that the center of a tree consists of a single vertex or a pair of adjacent vertices. The graph center problem is interesting from both a structural and an algorithmic point of view. Harary and Norman in \cite{harary1953dissimilarity} proved that the center of a connected graph lies with in a block of the graph. Buckley et. al in \cite{buckley1981graphs}, examined the problem of embedding a graph $H$ as the center of a supergraph $G$ and showed that for each graph $H$ with $n>9$ nodes and an integer $k>n+1$ there exists a $k$-regular graph $G$ having the center $H$. Proskurowski described the centers of maximal outer planar graphs and $2$-trees, \cite{proskurowski1980centers,proskurowski2011centers}. Laskar and Shier in \cite{laskar1983powers} proved that for a connected chordal graph the center always induces a connected subgraph, further Chepoi in \cite{chepoi1988centers} characterized the centers of chordal graphs. Also refer  Chepoi~\cite{chepoi1994linear}, for linear time algorithms for centers in chordal graphs, Soltan and Chepoi~\cite{soltan1984d} and  Chang~\cite{chang1991centers} for more problems on centers of chordal graphs.\\

Slater in \cite{slater1978centers} generalized the concept of center of a graph to center of a profile ( an arbitrary subset) of the vertex set of the graph. More formally, for any subset $S $ of $V$ in the graph $G=(V,E)$ the $S-eccentricity$, $e_{G,S}(v)$ (in short $e_S(v)$) of a vertex $v$ in $G$ is $\max\limits_{\substack{x \in S}}(d(v,x))$.  The $S$-center of $G$ is $C_{S}(G)= \{ v \in V |e_{S}(v) \le e_{S}(x)\, \forall x \in V\}$. Chang in \cite{chang} studied the $S$-center of distance hereditary graphs and proved that the $S$-center of a distance hereditary graph is either a connected graph of diameter 3 or a cograph. He also proved that for a bipartite distance hereditary graph the $S$-center is either a connected graph of diameter $\le 3$ 3 or an independent set. Various centrality notions in graphs is discussed by Buckley and Harary  in their book \cite{buha-94 }.  

In this paper, we continue the studies with the centers of profiles $S$  of vertices on several classes of graphs. When a subset $S$ of a vertex set of a graph has a center, we call this the center set of the set $S$. We characterize the center sets of several classes of graphs, namely block graphs, complete graphs, complete bipartite graphs, symmetric even graphs, etc.  We also characterize those class of graphs for which it doesn't contain a center set, namely, center critical graphs.

Several metric related invariants has been studied in graphs, for example, the diameter, radius, genus etc. being one of the basic parameters. In this paper, we introduce a new invariant related to the center of a graph, namely the \emph{center number} of a graph as the number of exact center sets in a graph and enumerate the center number of some classes of graphs, which will give an idea of the number of possible emergency facility locations in the such networks. In the next paragraph, we fix the notations and define the basic concepts used in this paper.

We consider only finite simple undirected connected graphs. For the graph $G$, $V(G)$ denotes its vertex set and $E(G)$ denotes its edge set. If the circumstances are clear, we use $V$ and $E$ for $V(G)$ and $E(G)$ respectively. For two vertices $u$ and $v$ of $G$, distance between $u$ and $v$ denoted by $d(u,v)$, is the length of the shortest $u-v$ path. The degree of a vertex $u$, denoted by $deg(u)$ is the number of vertices adjacent to $u$. A vertex $v$ of a graph $G$ is called a cut-vertex if $G-v$ is no longer connected. Any maximal induced subgraph of $G$ which does not contain a cut-vertex is called a \emph{block} of $G$. A graph $G$ is a \emph{block graph} if every block of $G$ is complete.  The \emph{eccentricity} $e(u)$ of a vertex $u$ is $\max\limits_{\substack{v \in V(G)}} d(u,v)$. A vertex $v$ is an \emph{eccentric vertex} of $u$ if $e(u)= d(u,v)$.  A vertex $v$ is an \emph{eccentric vertex} of $G$ if there exists a vertex $u$ such that $e(u)=d(u,v)$.  The diameter of the graph $G$, $diam(G)$, is $\max\limits_{\substack{u \in V(G)}} e(u)$ and the radius, $rad(G)$, is $\min\limits_{\substack{u \in V(G)}} e(u)$. A graph $G$ is called \emph{even} if for each vertex $u$ of $G$ there is a unique eccentric vertex $\bar{u}$, such that $d(u,\bar{u}) = diam(G)$. An even graph $G$ is called \emph{balanced} if $deg(u) = deg(\bar{u})$ for each $u \in V$, \emph{harmonic} if $\bar{u} \bar{v} \in E$ whenever $uv \in E $ and \emph{symmetric} if $d(u,v) + d(u,\bar{v}) = diam(G)$ for all $u,v \in V$.  The interval $I(u,v)$ between vertices $u$ and $v$ of $G$ consists of all vertices which lie in some shortest path between $u$ and $v$. A vertex $u$ of a graph $G$ is called a \emph{universal vertex} if $u$ is adjacent to all other vertices of $G$. An $S\subseteq V$  is a \emph{dominating set} in  $G$ if every vertex in $V \setminus S$ is adjacent to a vertex in $S$. The set of all vertices adjacent to $x$ in a graph $G$, denoted by $N(x)$, is the \emph{neighborhood} of the vertex $x$. For an $S \subseteq V$, a vertex $x \in S$ is called an \emph{interior vertex} if $N(x) \subseteq S$. For an $S \subseteq V$, \emph{neighborhood} of $S$ denoted by $N(S)=\bigcup\limits_{\substack{u \in V}} N(u)$. An $S\subseteq V$ is called a \emph{boundary set} of $G$ if does not contain any interior vertices.
For a graph $G$, an $A \subseteq V$ is defined to be a \emph{Center set} if there exists an $S \subset V$ such that $C_{S}(G)=A $. \\

The paper is divided as follows. In Section \ref{centersets} we identify the center sets of some familiar classes of graph such as block graphs, complete bipartite graphs, wheel graphs, odd cycles, symmetric even graphs etc. 
In Section~\ref{enumeration}, an enumeration of the center number of various classes of graphs is discussed and conclude the paper with some remarks. Before we go into the main contents of the paper, we characterize a class of graphs called \emph{center critical graphs}, which are graphs such that, for all proper subsets $S$ of $V$, $C_{S}(G)\not=C(G)$. Further, a graph $G$ is a \emph{unique eccentric vertex} graph ( in short, a $UEV$), if every vertex of $G$ has a unique eccentric vertex. The unique eccentric vertex of a vertex $u$ is denoted by $\bar{u}$. Note that a unique eccentric vertex graph $G$ become a self centered graph ( Self centered graph is a graph in which every vertex has the same eccentricity) if and only if each vertex of $G$ is an eccentric vertex of some other vertex in $G$ has been proved by Parthasarathy et al. in~\cite{parthasarathy1983unique}. It may also be observed that a $UEV$ graph need not be self centered and a self centered graph need not be a $UEV$ graph. For example, all even paths are $UEV$ graphs, but are not self centered and the complete graph $K_n$ and complete bipartite graph $K_{n,n}$ are self centered, but are not $UEV$ graphs. But we give a characterization of central critical graphs as precisely those graphs which are both self centered and $UEV$.

\begin{theorem}
A graph $G$ is center critical if and only if $G$ is both self centered and a $UEV$ graph.
\end{theorem}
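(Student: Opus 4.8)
The plan is to prove the two implications separately, with the real work in the forward direction. Throughout I would note that $e_V$ coincides with the ordinary eccentricity $e$ and $C_V(G)=C(G)$, assume $|V(G)|\ge 2$ (the graph $K_1$ being a degenerate case), and take profiles to be nonempty, so that the proper profiles appearing in the definition of center critical include every set of the form $V\setminus\{x\}$.

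For the implication that a self-centered $UEV$ graph $G$ is center critical, I would argue directly. Here $e\equiv \mathrm{diam}(G)=:d$, so $C(G)=V$, and each $v$ has a unique eccentric vertex $\bar v$ with $d(v,\bar v)=d$. The first point to record is that $v\mapsto\bar v$ is onto: for any $w$, self-centeredness gives $d(\bar w,w)=d=e(\bar w)$, so $w$ is an eccentric vertex of $\bar w$, and uniqueness forces $w=\overline{\bar w}$. Now suppose some proper nonempty profile $S$ had $C_S(G)=C(G)=V$; then $e_S$ is constant, $e_S\equiv k$. Choosing $x\in S$ gives $e_S(\bar x)\ge d(\bar x,x)=d$, while $e_S\le e\equiv d$, so $k=d$ and $e_S\equiv d$. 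Then for each $v$ some $x\in S$ satisfies $d(v,x)=d=e(v)$, and since $\bar v$ is the only vertex at distance $d$ from $v$ we get $\bar v=x\in S$; surjectivity of $v\mapsto\bar v$ then forces $S=V$, a contradiction. Hence no such $S$ exists and $G$ is center critical.

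For the converse, assume $G$ is center critical. The key observation I would isolate is that deleting a single vertex $x$ from the profile $V$ changes nothing unless $x$ is the unique eccentric vertex of somebody: if $x$ is not the unique eccentric vertex of any vertex, then for every $v$ the set of eccentric vertices of $v$ is nonempty and not equal to $\{x\}$, so it still meets $V\setminus\{x\}$, giving $e_{V\setminus\{x\}}(v)=e(v)$; hence $C_{V\setminus\{x\}}(G)=C(G)$, contradicting center criticality. Therefore every $x\in V$ is the unique eccentric vertex of some vertex. From this I would extract two facts: (i) every vertex of $G$ is an eccentric vertex of some other vertex; and (ii) the map $\phi$ sending each vertex that possesses a unique eccentric vertex to that vertex is a surjection onto $V$, so its domain has at least $|V|$ elements and therefore equals $V$, i.e.\ $G$ is a $UEV$ graph. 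Combining (i) and (ii) with the theorem of Parthasarathy et al.\ quoted above (a $UEV$ graph is self-centered iff every vertex is an eccentric vertex of some other vertex) shows that $G$ is self-centered, finishing the proof.

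The step I expect to need the most care is the single-vertex-deletion observation in the converse: one must check the degenerate case $v=x$ and confirm that in the relevant case $v$ genuinely has an eccentric vertex different from $x$, so that the eccentricity function is truly unchanged by the deletion. The remaining ingredients are routine — the involution identity $\overline{\bar v}=v$ used in the first implication is forced by self-centeredness alone, and the only bookkeeping needed is that the empty profile is excluded and that each $V\setminus\{x\}$ counts as a proper profile.
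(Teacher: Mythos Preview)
Your proposal is correct and follows essentially the same route as the paper: both directions hinge on the single-vertex deletions $S=V\setminus\{x\}$ to force every vertex to be the \emph{unique} eccentric vertex of some vertex, a finiteness/surjectivity count to conclude $G$ is $UEV$, and the Parthasarathy et al.\ result for self-centeredness, while the reverse implication uses the bijection $v\mapsto\bar v$ exactly as in the paper. Your write-up is in fact a bit tidier---you merge the paper's Cases~I and~II into the single observation ``$x$ is not the unique eccentric vertex of anyone'' and make the pigeonhole step explicit---but the underlying argument is the same.
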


\begin{proof}

 Let $G$ be a center critical graph having vertex set $\{v_1,\ldots,v_n\}$. First we shall prove that for every $v_i \in V$ there exists a $v_j \in V$ such that $v_i$ is the unique eccentric vertex of $v_j$. Assume the contrary. Here we shall take two cases.\\
 \textbf{Case-I}: Let there exist a vertex, say $v_k$, such that $v_k$ is not an eccentric vertex of any vertex.
Let $S=V \setminus \{v_k\}$. Then for every vertex $v_i$ of $G$, $e_S(v_i)=e(v_i)$ since the eccentric vertices of $v_i$ are in $S$. Since the eccentricities of none of the vertices change, $C_S(G)=C(G)$ contradicting our assumption that $G$ is center critical. Hence every vertex of $G$ is an eccentric vertex. \\
\textbf{Case-II}: Let $v_k$ be such that when ever $v_k$ is an eccentric vertex of $v_\ell$  then there exists a vertex $v_k'$ such that $v_k'$ is also an eccentric vertex of $v_\ell$. 
 Again take $S=V \setminus \{v_k\}$. Since every vertex $v_\ell$ that has $v_k$ as an eccentric vertex has another eccentric vertex, we have $e_S(v_k)=e(v_k)$. As above we get that $C_S(G)=C(G)$, a contradiction.
  That is, we have proved that each vertex $v_i$, $1 \le i \le n$ is a unique eccentric vertex of a vertex, say $v_i'$, where $v_i'=v_j$ for some $j$, $1 \le j \le n$. Since $\{v_1',\ldots,v_n'\}=V$ and each $v_i'$ has a unique eccentric vertex each vertex of $G$ has a unique eccentric vertex. Now its is also obvious that every vertex is an eccentric vertex. Therefore by the result of Parthasarathy, $G$ is self centered.

Conversely assume that $G$ is both self centered and unique eccentric vertex graph, and let $rad(G)=r$. By the result of Parathasarathy,  every vertex of $G$ is an eccentric vertex. Therefore for every $x \in V$ there exists a $y \in V$ such that $x=\bar{y}$. Let $S \subseteq V$ and $x \in V \setminus S$. Then $e(y)=r$ and since $\bar{y}=x \in V \setminus S$, $e_S(y) <r$. Let $z \in S$. Then $e_S(\bar{z})=r$. Hence $C_S(G) \neq V$ which shows that $G$ is center critical.

\end{proof}

\section{Center Sets of Some Graph Classes}\label{centersets}
Prior to identifying the center sets of various classes of graphs we recall the following lemma by Harary et.al in \cite{harary1953dissimilarity}.
\begin{lemma}[Lemma 1 of \cite{harary1953dissimilarity}]\label{harary}
The center of a connected graph $G$ is contained in a block of $G$.
\end{lemma}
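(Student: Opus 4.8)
The plan is to argue by contradiction using the structure of cut vertices. Suppose the center $C(G)$ is not contained in any single block of $G$. Then there exist two center vertices $x$ and $y$ lying in no common block, and consequently every $x$-$y$ path in $G$ passes through some cut vertex $v$ distinct from both $x$ and $y$; equivalently, $x$ and $y$ lie in different components of $G-v$.

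The key local claim I would establish first is the following: \emph{for every cut vertex $v$ of $G$, the set $C(G)\setminus\{v\}$ is contained in a single component of $G-v$.} To prove it, fix a cut vertex $v$ and let $w$ be any vertex with $d(v,w)=e(v)$ (an eccentric vertex of $v$), lying in the component $H$ of $G-v$. For any vertex $u$ lying in a component $H'\neq H$ of $G-v$, every $u$-$w$ path passes through $v$, so $d(u,w)=d(u,v)+d(v,w)\ge 1+e(v)$. Hence $e(u)\ge d(u,w)>e(v)\ge \mathrm{rad}(G)$, and therefore $u\notin C(G)$. Thus $C(G)\setminus\{v\}$ avoids every component of $G-v$ that does not contain an eccentric vertex of $v$. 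If $v$ has eccentric vertices in two distinct components, this forces $C(G)\subseteq\{v\}$; otherwise all eccentric vertices of $v$ lie in a single component $H$ and $C(G)\setminus\{v\}\subseteq H$. In either case the claim holds.

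Applying this claim to the cut vertex $v$ that separates $x$ from $y$ yields an immediate contradiction, since $x,y\in C(G)\setminus\{v\}$ while $x$ and $y$ lie in different components of $G-v$. Hence $C(G)$ is contained in a single block of $G$.

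I expect the main (and only mild) obstacle to be the bookkeeping around the block--cut-vertex structure: carefully justifying that two vertices lying in no common block are separated by a cut vertex different from both, and disposing of the degenerate case in which one of the chosen center vertices is itself the separating cut vertex (which does not arise, since that vertex must be distinct from $x$ and $y$). The metric identity $d(u,w)=d(u,v)+d(v,w)$ for vertices $u,w$ in different components of $G-v$ is the genuine engine of the proof, and it is immediate once the components of $G-v$ are fixed.
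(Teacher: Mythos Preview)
Your argument is correct, and it is essentially the same as the one the paper uses. Note that the paper does not actually reprove Lemma~\ref{harary} itself (it merely cites Harary--Norman), but immediately afterwards it proves the generalization to $S$-centers with a proof it describes as ``almost similar.'' That proof runs exactly along your lines: assume the center meets two components $G_1,G_2$ of $G-v$ for some cut vertex $v$, take an eccentric vertex $u$ of $v$, observe that a shortest $u$--$v$ path misses one of the components (say $G_1$), and then for a center vertex $w\in G_1$ obtain $e(w)\ge d(w,u)=d(w,v)+d(v,u)\ge 1+e(v)$, a contradiction. Your ``local claim'' about $C(G)\setminus\{v\}$ lying in one component of $G-v$ is just a clean repackaging of this same metric step; the engine in both arguments is the identity $d(u,w)=d(u,v)+d(v,w)$ for $u,w$ in different components of $G-v$.
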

We generalize this lemma to any $S$-center of a graph and the proof is almost similar to the proof given there.
\begin{theorem}
Any $S$-center of a connected graph $G$ is contained in a block of $G$.
\end{theorem}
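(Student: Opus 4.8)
The plan is to mimic the classical argument of Harary and Norman (Lemma~\ref{harary}) almost verbatim, adapting it to the $S$-eccentricity. Suppose, for contradiction, that the $S$-center $C_S(G)$ is not contained in a single block. Since every block-cut-tree of a connected graph is a tree, there is a cut-vertex $w$ of $G$ such that $C_S(G)$ contains vertices lying in at least two distinct components of $G - w$; equivalently, removing $w$ separates some pair of center vertices. Let $r = e_S(w)$ be the common minimum $S$-eccentricity. The goal is to produce a vertex (namely $w$ itself, or a neighbour of $w$ on the ``$S$-side'') whose $S$-eccentricity is strictly smaller than $r$, contradicting the minimality that defines $C_S(G)$.

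The key observation is a routing fact: let $H_1, \dots, H_k$ be the components of $G - w$, together with $w$ adjoined to each, and for each vertex $x \in S$ let $H(x)$ be the branch containing $x$ (with $w$ itself assigned arbitrarily, say to $H_1$). For any vertex $u$ lying in branch $H_i$ and any $x \in S$ with $H(x) = H_j \neq H_i$, every $u$–$x$ path must pass through the cut-vertex $w$, so $d(u,x) = d(u,w) + d(w,x)$. The steps I would carry out are: (i) pick a center vertex $c \in C_S(G)$ that does \emph{not} lie in the branch $H_{i_0}$ realizing $e_S(w)$, i.e. choose a witness $x_0 \in S$ with $d(w,x_0) = e_S(w) = r$, let $H_{i_0}$ be its branch, and note that because $C_S(G)$ meets at least two branches there is some $c \in C_S(G) \setminus H_{i_0}$ (if $C_S(G) \subseteq H_{i_0}$ then $C_S(G)$ already lies in one branch and we can push the separating cut-vertex further — this bookkeeping is exactly as in Harary--Norman); (ii) compute $e_S(c)$: for any $x \in S$ not in $c$'s branch, $d(c,x) = d(c,w) + d(w,x) \ge 1 + d(w,x)$, and for $x_0$ in particular $d(c,x_0) = d(c,w) + r \ge r+1 > r$, whence $e_S(c) \ge r+1 > r = e_S(w)$, so $c \notin C_S(G)$ — contradiction.

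Actually the cleanest packaging, and the one I would write up, flips this: instead of deriving a contradiction from a misplaced center vertex, show directly that $w$ (or a suitable vertex) beats every center vertex. Fix $r = \min_{v} e_S(v)$ and pick $c \in C_S(G)$; let $w$ be a cut-vertex separating $c$ from some other vertex of $C_S(G)$, and let $H_{i_0}$ be the branch at $w$ containing $c$. I claim $e_S(w') < e_S(c)$ where $w'$ is the neighbour of $w$ toward a branch containing a center vertex $\neq c$ — wait, one must be careful that $S$ might be concentrated entirely in $H_{i_0}$. So the right split is on where the $S$-eccentricity witnesses of $c$ live. If some witness $x$ of $e_S(c)$ lies outside $H_{i_0}$, then $d(w,x) = d(c,x) - d(c,w) < d(c,x) = e_S(c)$, and for every other $y \in S$ we have $d(w,y) \le d(c,y) \le e_S(c)$ when $y \notin H_{i_0}$... the inequality $d(w,y)\le d(c,y)$ needs $c\notin$ the branch of $y$, which holds, and when $y \in H_{i_0}$ we instead use a center vertex $c'\neq c$ in another branch and the symmetric estimate. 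Chasing these cases to their end gives $\max_{y\in S} d(w,y) < e_S(c)$, contradicting $c \in C_S(G)$.

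The main obstacle is purely the case analysis bookkeeping: the argument is trivial when $S$ straddles the cut-vertex but needs the ``two center vertices in different branches'' hypothesis precisely to handle the degenerate case where all of $S$ sits in one branch $H_{i_0}$ — there one uses a center vertex $c' \in C_S(G)$ outside $H_{i_0}$ and observes that every $x \in S$ satisfies $d(c', x) = d(c', w) + d(w, x) \ge 1 + d(w,x)$, so $e_S(c') \ge 1 + e_S(w) > e_S(w)$, again contradicting $c' \in C_S(G)$. Since the paper already says ``the proof is almost similar,'' I would state the block-cut-tree reduction, present the single key distance identity $d(u,x) = d(u,w) + d(w,x)$ for vertices separated by a cut-vertex $w$, and then note that minimality of $S$-eccentricity at $w$ versus a center vertex in another branch yields the contradiction, deferring the routine verification of the remaining sub-cases to the reader exactly as Harary and Norman do.
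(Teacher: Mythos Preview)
Your proposal is correct and follows essentially the same Harary--Norman argument as the paper: pick a cut-vertex $v$ separating two center vertices, take a witness $u\in S$ of $e_S(v)$, and observe that a center vertex $w$ in a branch not containing $u$ satisfies $e_S(w)\ge d(w,u)=d(w,v)+d(v,u)\ge 1+e_S(v)$, a contradiction. The paper's write-up is exactly your steps~(i)--(ii) and nothing more; your later repackagings and the worry about ``all of $S$ in one branch'' are unnecessary, since by choice of $v$ the center meets at least two components of $G-v$ and hence some center vertex lies outside the branch of the witness $u$.
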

\begin{proof}
For an $S \subseteq V $, assume that $C_S(G)$ lies in more than one block of $G$. Then $G$ contains a vertex $v$ such that $G-v$ contains atleast two components, say, $G_1$ and $G_2$, each of which contains a vertex belonging to $C_S(G)$. Let $u$ be the vertex of $S$ such that $d(u,v)=e_S(v)$ and $P$ be the shortest $u-v$ path. Then $P$ does not intersect at least one of $G_1$ and $G_2$, say $G_1$. Let $w$ be the vertex of $G_1$ such that $w \in C_S(G)$. Then $v$ belong to the shortest $w-u$ path and hence \\
$e_S(w)  \ge d(w,u)=d(w,v)+d(u,v) \ge 1+ e_S(v)$ contradicting the fact that $w \in C_S(G)$. Thus for any $S \subseteq V$, $C_S(G)$ lies in a single block of $G$.
\end{proof}
\begin{proposition}
 Let $G$ be a block graph with vertex set $V$ and blocks $B_1,\ldots ,B_r$. For $1 \le i \le r$, let $V(B_i)=V_i$. The center sets of $G$ are singleton sets $\{v\}, v \in V(G)$ and $V_i$ for $1\le i \le r$.
 \label{lem1}
\end{proposition}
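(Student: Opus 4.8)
The plan is to prove the statement in two directions: first that every singleton $\{v\}$ and every block vertex set $V_i$ is realizable as an $S$-center, and second that no other subset of $V$ can be an $S$-center. For the first direction, a singleton $\{v\}$ is easy: take $S=\{v\}$, so that $e_S(v)=0$ while $e_S(x)=d(x,v)>0$ for every other $x$, hence $C_S(G)=\{v\}$. For a block $B_i$, the natural choice is $S=V_i$ itself (or perhaps $S=V$ restricted appropriately); I would compute $e_S(x)$ for $x\in V_i$ and for $x\notin V_i$ and show the minimum is attained exactly on $V_i$. The key structural fact about block graphs I would exploit is that between any two vertices there is a unique path passing through a determined sequence of cut vertices and blocks, so distances decompose additively along this ``block-cut tree'' structure; in particular, every vertex $x\notin V_i$ reaches $V_i$ only through a unique cut vertex $c\in V_i$, and $d(x,y)=d(x,c)+1$ for the diametrically opposite choice of $y\in V_i$, strictly exceeding the $S$-eccentricity of vertices inside $B_i$.

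For the converse, suppose $A=C_S(G)$ for some $S\subseteq V$. By the theorem just proved, $A$ lies inside a single block, so $A\subseteq V_i$ for some $i$. I then need to rule out every proper nonempty subset $A\subsetneq V_i$ with $|A|\ge 2$ (and also $A$ empty, but $C_S(G)$ is always nonempty for connected finite $G$). The idea is: since $B_i$ is a complete graph, any two vertices of $V_i$ are at distance $1$ from each other, and any vertex of $V_i$ has the same distance to any vertex $x$ lying ``behind'' a cut vertex $c\ne$ itself. More precisely, I would argue that for two vertices $u,w\in V_i$, the only way $e_S(u)\ne e_S(w)$ is if $u$ or $w$ is itself a cut vertex of $G$ separating $S$ in a particular way, or if $S$ contains $u$ or $w$; by analyzing where the ``farthest'' vertex of $S$ sits relative to the block $B_i$, one shows that $e_S$ is constant on $V_i$ except possibly where a vertex of $V_i$ is a cut vertex leading toward a far piece of $S$, and these exceptional cases force the center to collapse to a singleton rather than a set of size strictly between $1$ and $|V_i|$. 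So the realizable centers inside $B_i$ are exactly all of $V_i$ or a single vertex.

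Concretely, the case analysis I expect to carry out is: (i) if $S\subseteq V_i$ with $|S|\ge 2$, then every $u\in V_i$ has $e_S(u)=1$ (distance $1$ to any other vertex of the complete block, and $0$ to itself), while any $x\notin V_i$ has $e_S(x)\ge 2$; so $C_S(G)=V_i$. (ii) If $S=\{s\}$ is a singleton, or more generally if all of $S$ lies on one ``side'' reached through a single cut vertex $c$ of $G$, then the $S$-eccentricity is minimized at the vertex closest to that side, giving a singleton center (or, if $s\in V_i$, possibly $V_i$ again when $S\subseteq V_i$). (iii) The general case: let $c_1,\ldots,c_k$ be the cut vertices of $B_i$ through which parts of $S$ outside $B_i$ are reached, together with whatever part of $S$ lies in $V_i$; using additivity of distance through cut vertices one shows $e_S$ restricted to $V_i$ is either constant (center $=V_i$) or uniquely minimized (center a singleton), and never has a minimizing set that is a proper subset of $V_i$ of size $\ge 2$.

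The main obstacle I anticipate is the bookkeeping in case (iii): making precise the claim that the $S$-eccentricity function on the vertex set of a block $B_i$ of a block graph takes a very restricted form — namely $e_S(u) = \max\{\,\alpha,\ \beta + [u\ne c^\ast]\,\}$ type expressions where $c^\ast$ is a distinguished cut vertex ``pointing toward the bulk of $S$'' — and then reading off from this that the argmin is all of $V_i$ or one vertex. Once that lemma-like statement about $e_S|_{V_i}$ is nailed down, both the realizability of $V_i$ and the impossibility of intermediate subsets follow immediately, and the proof is complete by invoking the containment-in-a-block theorem to reduce to a single block at the outset.
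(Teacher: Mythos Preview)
Your approach is correct and in fact cleaner than the paper's. You first invoke the just-proved containment theorem to pin $C_S(G)$ inside a single block $V_j$, then analyze $e_S|_{V_j}$ directly: for $u\in V_j$, each cut vertex $c$ of $B_j$ through which part of $S$ is reached contributes $\delta_c+[u\neq c]$ to $e_S(u)$, where $\delta_c$ is the maximum distance from $c$ to the portion of $S$ lying behind $c$, while $S\cap V_j$ contributes at most $1$; hence $e_S|_{V_j}$ is constant when the maximal $\delta_c$ is tied among at least two cut vertices, and uniquely minimized at $c^\ast$ when a single $\delta_{c^\ast}$ strictly dominates. The paper does not invoke the containment theorem in its proof of this proposition but instead argues by path-following: pick a cut vertex $x$, walk along a geodesic toward a farthest $v\in S$, and claim $e_S$ drops by one at each step until either two consecutive values coincide (center equals the block containing those two vertices) or a strict minimum appears (singleton center). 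Your route is more modular and makes the dichotomy immediate from a single formula; the paper's approach is more algorithmic in spirit but its key step ``$e_S(x_1)=k-1$'' is asserted without proof and is not literally true for an arbitrary starting cut vertex (on the path on five vertices with $S$ the two endpoints and $x$ the middle vertex, one has $e_S(x)=2$ but $e_S(x_1)=3$), so additional care about the choice of $x$ is needed there which your argument sidesteps entirely.
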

\begin{proof}
If $S=\{v\}$, then $e_S(v)=0\leq e_S(x)$ for all $x\in V$.  Therefore $C_{\{v\}}(G)=\{v\}$. Hence $\{v\}$, where $v\in V$ are all center sets.

Let $S$ be a proper subset of $V_i$, $1 \le i \le r$  containing atleast two elements . Hence $e_S(x)=1$ for every $x\in V_i$ and $e_S(x)>1$ for all $x\in V-V_i$.  So $C_S (G)=V_i$. Therefore each $V_i$, $1 \le i \le r$ is a center set.

Consider $S \subseteq V(G)$ containing atleast $2$ elements from $2$ different blocks, and let $x$ be a cut vertex of $G$ with $e_S(x)=k$. Also assume that $d(x,v)=k$ where $v \in S$. Let $P=x=x_0x_1\ldots x_rx_{r+1}\ldots x_k=v$ be the shortest $x-v$ path.  See that $e_{S}(x_1)=k-1$. Since the eccentricities will never decrease to zero, we can find two vertices in $P$ (may be the identical) say $x_r$, and $x_{r+1}$ so that $e_{S}(x_r)=e_{S}(x_{(r+1)})=k-r$. Then for every vertex $y$ in the block containing $x_r$ and $x_{r+1}$, $e_{S}(y)=k-r$ and as we move away from this block the $S$-eccentricity increases.  Hence the $S$-center of $G$ is the block containing $x _r$ and $x_{r+1}$.

Now let $e_{S}(x_r)=k-r$ and $e_{S}(x_{r+1})=k-r+1$.  Then for every $y$ other than $x_r$ in the block containing $x_r$ and $x_{r+1}$, $e_{S}(y)=k-r+1$ and as we move away from this block the $S$-eccentricity increases. Therefore $S$-center of $G$ is $x_r$. Hence the center sets of block graphs are $\{v\}$, $v \in V(G)$ and $V_i, 1 \le i \le r$.
\end{proof}
As a consequence of the Proposition \ref{lem1}, we have the following corollaries. Note that the Corollary \ref{cor1}, is a theorem of Slater in \cite{slater1978centers}.
\begin{corollary}
The center sets of the complete graph $K_n$ with vertex set $V$ are $\{u\},u \in V$ and the whole set $V$.
\label{cor1}
\end{corollary}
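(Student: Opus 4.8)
The plan is to derive Corollary~\ref{cor1} directly from Proposition~\ref{lem1} by observing that $K_n$ is itself a block graph with exactly one block. First I would note that $K_n$ is $2$-connected (for $n\ge 2$), so it has no cut-vertex and is therefore a single block $B_1$ with $V(B_1)=V$; moreover this block is complete, so $K_n$ is a block graph in the sense defined above. (The degenerate case $n=1$ is trivial, and $n=2$ gives $K_2$, where the single edge is the unique block, still matching the statement.) Applying Proposition~\ref{lem1} with $r=1$ and $V_1=V$, the center sets are exactly the singletons $\{v\}$, $v\in V$, together with $V_1=V$, which is precisely the asserted list.

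For completeness — and because invoking the full block-graph machinery is slightly heavy for such a simple object — I would also indicate the short self-contained verification. Each singleton $\{v\}$ is a center set since $C_{\{v\}}(K_n)=\{v\}$, as $e_{\{v\}}(v)=0$. For any $S\subseteq V$ with $|S|\ge 2$, every vertex $x\in V$ satisfies $e_S(x)=1$ (it is adjacent to, or equal to, every vertex of $S$, and at least one vertex of $S$ differs from $x$), so $C_S(K_n)=V$. Since these exhaust all nonempty profiles $S$, the only center sets are the singletons and $V$ itself.

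There is no real obstacle here: the entire content is recognizing that $K_n$ is a one-block block graph, after which Proposition~\ref{lem1} applies verbatim. The only minor point worth stating explicitly is that $S$ must be nonempty (a profile is a subset of $V$ used to define $S$-eccentricity), so that the two cases $|S|=1$ and $|S|\ge 2$ are genuinely exhaustive; this is implicit in the setup of Proposition~\ref{lem1} and needs no further comment.
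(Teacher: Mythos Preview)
Your proposal is correct and matches the paper's approach exactly: the paper also derives this corollary immediately from Proposition~\ref{lem1}, implicitly using that $K_n$ is a block graph with a single block $B_1$ whose vertex set is $V$. Your additional self-contained verification is fine but unnecessary here.
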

\begin{corollary}[Theorem $4$ of \cite{slater1978centers}]
The center sets of a tree $T=(V,E)$ are $\{u\},u \in V$, and $\{u,v\}, uv \in E$
\label{cor2}
\end{corollary}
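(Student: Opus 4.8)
The plan is to obtain this as an immediate specialization of Proposition~\ref{lem1}. First I would record the standard structural fact that a tree is a block graph whose blocks are precisely its edges: in a tree $T=(V,E)$, every maximal cut-vertex-free induced subgraph consists of a single edge, which is a copy of $K_2$ and hence complete. Therefore Proposition~\ref{lem1} applies to $T$, and the blocks $B_1,\dots,B_r$ of $T$ are exactly the edges $uv\in E$, so that $\{V(B_i):1\le i\le r\}=\{\{u,v\}:uv\in E\}$.

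With this identification in hand, the conclusion of Proposition~\ref{lem1} reads off directly: the center sets of a block graph are the singletons $\{v\}$, $v\in V$, together with the vertex sets $V_i$ of its blocks; substituting the blocks of $T$, these are exactly $\{v\}$, $v\in V$, and $\{u,v\}$, $uv\in E$. Both directions --- that every such set is realized as $C_S(T)$ for a suitable profile $S$ (take $S=\{v\}$ for a singleton and $S=\{u,v\}$ for the edge $uv$), and that no other set can arise --- are already contained in Proposition~\ref{lem1}, so nothing further needs to be argued.

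There is essentially no obstacle here; the only point worth a sentence is the degenerate case $T=K_1$, where there are no edges and the single block is $K_1$ rather than $K_2$, but then the only profile is $S=\{v\}$ with $C_S(T)=\{v\}$, consistent with the statement. If one wanted a proof independent of Proposition~\ref{lem1}, the argument used there specializes cleanly: for $|S|\ge 2$ one walks an $S$-eccentric vertex toward a farthest point of $S$ along the unique tree path, observing that the $S$-eccentricity drops by one at each step until it stabilizes either on a single vertex or on the two endpoints of one edge, and strictly increases off that set --- but citing Proposition~\ref{lem1} is the shortest route.
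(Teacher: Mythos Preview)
Your proposal is correct and matches the paper's approach exactly: the paper presents this statement as an immediate corollary of Proposition~\ref{lem1} without giving a separate proof, precisely because a tree is a block graph whose blocks are its edges. Your additional remarks on the degenerate case $T=K_1$ and the sketch of a direct argument go slightly beyond what the paper records, but the core derivation is identical.
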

\begin{corollary}
The induced subgraphs of all center sets of a block graph are connected.
\end{corollary}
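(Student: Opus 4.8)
The plan is to read off the conclusion directly from the characterization already established in Proposition~\ref{lem1}. That proposition asserts that the center sets of a block graph $G$ with blocks $B_1,\ldots,B_r$ are precisely the singletons $\{v\}$ with $v \in V(G)$ together with the block vertex sets $V_i = V(B_i)$ for $1 \le i \le r$. So it suffices to verify that the subgraph induced by each of these two types of center set is connected, and then the corollary follows by exhausting all possibilities.

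First I would dispose of the singleton case: for $v \in V(G)$, the induced subgraph on $\{v\}$ is a single vertex, which is connected by convention. Next I would handle the block case: by hypothesis $G$ is a block graph, so for each $i$ the block $B_i$ is a complete graph, and the subgraph induced by $V_i = V(B_i)$ is exactly $B_i$ (an induced subgraph on the vertices of a block is that block), hence it is $K_{|V_i|}$, which is connected. Since every center set of $G$ is of one of these two forms, all center sets induce connected subgraphs.

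There is essentially no obstacle here; the work has all been done in Proposition~\ref{lem1}, and this corollary is a one-line deduction combining that classification with the defining property of block graphs (every block is complete, hence connected). The only point deserving a word of care is the observation that the subgraph of $G$ induced by the vertex set of a block $B_i$ coincides with $B_i$ itself, which is immediate since $B_i$ is already an induced subgraph of $G$.
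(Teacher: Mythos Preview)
Your proof is correct and follows exactly the intended route: the paper states this as an immediate corollary of Proposition~\ref{lem1} without further argument, and your deduction from that classification is precisely what is meant. One could even note that blocks of \emph{any} graph are connected by definition, so the completeness hypothesis is more than you need, but your argument is fine as written.
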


Now we shall find the center sets of some simple classes of graphs such as complete bipartite graphs, $K_n-e$, Wheel graphs, etc. First we identify the center sets of bipartite graphs $K_{m,n}$, $m,n >1$. When $m$ or $n$ is 1, $K_{m,n}$ is a tree whose center sets have already been identified.
\begin{proposition}
Let $K_{m,n}$ be a complete bipartite graph with bipartition $(X,Y)$ where $|X|=m>1$ and $|Y|=n>1$. Then the center sets of $K_{m,n}$ are
\begin{enumerate}
\item $V=X \cup Y$
\item $X$
\item $Y$
\item $\{v\}, v \in V$
\item $\{x,y\}, x \in X, y \in Y$.
\end{enumerate} \label{kmn}
\end{proposition}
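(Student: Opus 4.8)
The plan is to enumerate all possibilities for a profile $S \subseteq V$ according to how $S$ distributes across the two sides $X$ and $Y$, compute the $S$-eccentricity function in each case, and read off the $S$-center. Recall that in $K_{m,n}$ with $m,n>1$ the distances are: $d(u,v)=0$ if $u=v$, $d(u,v)=1$ if $u,v$ lie in different parts, and $d(u,v)=2$ if $u,v$ are distinct vertices of the same part; in particular $\mathrm{diam}(K_{m,n})=2$. So for any $S$ and any $v$, $e_S(v)\in\{0,1,2\}$, which keeps the case analysis short.

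First I would dispose of the singleton profiles: if $S=\{v\}$ then $e_S(v)=0$ while $e_S(x)\ge 1$ for $x\ne v$, so $C_S(G)=\{v\}$, giving item (4). Next, for a profile $S=\{x,y\}$ with $x\in X$, $y\in Y$: then $e_S(x)=d(x,y)=1$, $e_S(y)=1$, and for any $z\notin\{x,y\}$ we have $z$ in the same part as $x$ or as $y$, hence $e_S(z)=2$. Thus $C_S(G)=\{x,y\}$, giving item (5). Now I turn to profiles with at least two vertices from the same side, say $|S\cap X|\ge 2$. If $S\subseteq X$, then every $y\in Y$ has $e_S(y)=1$ while every $x\in X$ has $e_S(x)=2$ (since $S$ contains a vertex of $X$ other than $x$), so $C_S(G)=Y$; symmetrically $S\subseteq Y$ with $|S|\ge 2$ gives $C_S(G)=X$, yielding items (2) and (3). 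The remaining case is $|S\cap X|\ge 2$ and $S\cap Y\ne\emptyset$ (or the symmetric one): here every vertex of $G$, whichever side it is on, has a vertex of $S$ at distance $2$ from it on its own side or fails to — careful: a vertex $y_0\in Y\cap S$ might have no other $Y$-vertex in $S$. One checks that if $|S\cap X|\ge 2$ and $|S\cap Y|\ge 2$ then $e_S\equiv 2$ on all of $V$, so $C_S(G)=V$; and if $|S\cap X|\ge 2$ but $S\cap Y=\{y_0\}$, then $e_S(y_0)=1$ (its farthest $S$-vertex is in $X$, distance $1$) while every other vertex has $S$-eccentricity $2$, so $C_S(G)=\{y_0\}$, already in item (4). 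Assembling these subcases shows item (1) arises (from $|S\cap X|\ge2,|S\cap Y|\ge2$) and no new sets appear.

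To finish, I would observe that the five families listed are exactly the sets realized: the case analysis above is exhaustive over all nonempty $S\subseteq V$ (split on $|S\cap X|$ and $|S\cap Y|$ each being $0$, $1$, or $\ge 2$, excluding the impossible $0,0$), and in every case the resulting center is one of $V$, $X$, $Y$, a singleton, or a mixed pair $\{x,y\}$. Conversely each of these is achieved by an explicit $S$ as exhibited, so the list is complete and tight.

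The main obstacle is bookkeeping rather than depth: one must be scrupulous in the "unbalanced" subcases — e.g. $|S\cap X|\ge 2$ with $|S\cap Y|=1$ collapses to a singleton center, not to $V$ or $X$ — since it is easy to wrongly guess that any profile touching both sides yields $C_S(G)=V$. Getting these degenerate subcases right is the only place the argument can go wrong; everything else is a direct distance computation using $\mathrm{diam}(K_{m,n})=2$.
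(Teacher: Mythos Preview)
Your proposal is correct and follows essentially the same approach as the paper: both arguments split the analysis of $S$ according to the sizes of $S\cap X$ and $S\cap Y$ (each being $0$, $1$, or $\ge 2$) and compute $e_S$ directly using that all distances in $K_{m,n}$ lie in $\{0,1,2\}$. The only cosmetic difference is organizational: the paper first exhibits a witness $S$ for each of the five target sets and then remarks that all profile types have been covered, whereas you run the exhaustive case split first and read off the realized centers; the content is the same.
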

\begin {proof}
First we shall show that each of the sets described in the theorem are center sets.
\begin{enumerate}
\itemsep-7pt
\item Let $A=A_1 \cup A_2$ where $A_1 \subseteq X$, $A_2 \subseteq Y$, $|A_1| >1$ and $|A_2|>1$. For all $v \in V$, $e_A(v)=2$ and therefore $C_A(K_{m,n})=V$.\\
\item Take $A_2 \subseteq Y$ with $|A_2|>1$. Then for each $x \in X$, $e_{A_2}(x)=1$ and for each $y \in Y, e_{A_2}(y)=2$. Hence $C_{A_2}(K_{m,n})=X$.\\
\item Take $A_1 \subseteq X$ with $|A_1|>1$. Then for each $x \in X$, $e_{A_1}(x)=2$ and for each $y \in Y, e_{A_1}(y)=1$. Hence $C_{A_1}(K_{m,n})=Y$.\\
\item Take $A= \{x\} \cup A_2$ where $x \in X$, $A_2 \subseteq Y$ and $|A_2|>1$. Then $e_{A}(x)=1$ and $e_{A}(v)=2$ for all $v \neq x$. Hence $C_{A}(K_{m,n})=\{x\}$. Similarly taking  $A= \{y\} \cup A_1$ where $x \in Y$, $A_1 \subseteq X$ and $|A_1|>1$ we get $C_{A}(K_{m,n})=\{y\}$. Hence for every $v \in V$, $\{v\}$ is a center set.\\
\item Take $A=\{x,y\}$ where $x \in X$ and $y \in Y$. Then $e_{A}(x)=1$, $e_{A}(y)=1$ and for all other $v \in V$, $e_{A}(v)=2$. Hence $C_{A}(K_{m,n})=\{x,y\}.$
\end{enumerate}
Hence we have got that all the sets described in the theorem are center sets. Also we have found out the center sets of all types of subsets of $V(K_{m,n})$ except $\{v\}$, $v \in V$. But $C_{\{v\}}( K_{m,n})=\{v\} $. Therefore center sets of $K_{m,n}$ are precisely those given in the theorem.
\end{proof}
\begin{proposition}
For the graph $K_n-e(=xy)$, the center sets are
\begin{enumerate}
\itemsep-9pt
\item $\{v\}$, $v \in V$\\
\item $V \setminus \{x\}$\\
\item $V \setminus \{y\}$\\
\item $V \setminus \{x,y\}$\\
\item $V$
\end{enumerate}\label{kn-e}
\end{proposition}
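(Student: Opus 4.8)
The plan is to exploit that $K_n-e$ (with $e=xy$ and $n\ge 3$, so that the graph is connected) has diameter $2$: every pair of vertices other than the pair $x,y$ is adjacent, while $d(x,y)=2$, realised through any of the $n-2\ge 1$ remaining vertices. Consequently every $S$-eccentricity takes a value in $\{0,1,2\}$, which already makes an exhaustive analysis feasible. I would then argue in two directions: first that each listed family occurs as some $C_S(G)$, and then that no other set can.

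For realisability: taking $S=\{v\}$ gives $e_S(v)=0<1\le e_S(w)$ for $w\neq v$, so $C_S(G)=\{v\}$. Taking $S=\{x,y\}$ gives $e_S(x)=e_S(y)=2$ and $e_S(w)=1$ for every other $w$, hence $C_S(G)=V\setminus\{x,y\}$. Taking $S=\{x,z\}$ with $z\notin\{x,y\}$ gives $e_S(y)=d(y,x)=2$, while every other vertex is adjacent to (or equal to) each element of $S$ and so has $S$-eccentricity $1$; thus $C_S(G)=V\setminus\{y\}$, and symmetrically $S=\{y,z\}$ yields $V\setminus\{x\}$. Finally, for any $S$ with $|S|\ge 2$ and $S\cap\{x,y\}=\emptyset$, every vertex is adjacent to, or equal to, each vertex of $S$, so $e_S\equiv 1$ and $C_S(G)=V$.

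For the converse I would fix an arbitrary profile $S$ and split on $|S|$ and $S\cap\{x,y\}$. If $|S|=1$ we are in the singleton case. If $|S|\ge 2$: when $\{x,y\}\subseteq S$ the computation above forces $C_S(G)=V\setminus\{x,y\}$; when exactly one of $x,y$ lies in $S$, say $x\in S$, $y\notin S$, then $S$ also contains some $z\notin\{x,y\}$, so $e_S(x)=1$, $e_S(y)=d(y,x)=2$, and $e_S(w)=1$ for every $w\neq y$, whence $C_S(G)=V\setminus\{y\}$ (symmetrically $V\setminus\{x\}$); and when $S\cap\{x,y\}=\emptyset$ one has $e_S\equiv 1$ and $C_S(G)=V$. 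These cases exhaust all profiles, so the center sets are exactly the five listed families. I do not expect a genuine obstacle: the only points requiring care are keeping the standing assumption $n\ge 3$ (so that $G$ is connected and $d(x,y)=2$) and ensuring that the realising profile has at least two elements wherever that is needed, so that no unwanted $0$-eccentricity vertex slips into the center.
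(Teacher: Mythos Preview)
Your proof is correct and follows essentially the same approach as the paper: a case analysis on $|S|$ and on $S\cap\{x,y\}$, computing the $S$-eccentricities (which are forced to lie in $\{0,1,2\}$) in each case. The paper handles realisability and exhaustiveness in a single pass---its five cases already partition all nonempty $S\subseteq V$---whereas you do two passes, but the computations are identical. One small remark: your realising profile for $V$ (an $S$ with $|S|\ge 2$ and $S\cap\{x,y\}=\emptyset$) tacitly requires $n\ge 4$; the paper's proof has the same hidden assumption, so this is a feature of the statement rather than a defect in your argument.
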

\begin{proof} As in Proposition~\ref{kmn}, initially we prove that all the sets described in the theorem are center sets.
\begin{enumerate}
\itemsep-7pt
\item For each $v \in V$, $C_{\{v\}}( K_{n}-e)=\{v\} $.\\
\item Let $S \subseteq V$ be such that $|S|>1$, $y \in S$ and $x \notin S$. Then for each $v \neq x$,  $e_{S}(v)=1$ and  $e_{S}(x)=2$. Hence $C_{S}(K_n-e)= V \setminus \{x\}$.\\
\item For $S \subseteq V$  such that $|S|>1$, $x \in S$ and $y \notin S$ we have $e_S(v)=1$ for $v \neq y$ and $e_S(y)=2$. Hence $C_{S}(K_n-e)= V \setminus \{y\}$.\\
\item Let $S \subseteq V$ be such that $x,y \in S$. Then for each $v \neq x,y$, $e_{S}(v)=1$, $e_{S}(x)=e_{S}(y)=2$ and hence  $C_{S}(K_n-e)= V \setminus \{x,y\}.$\\
\item Let $S \subseteq V$ be such that $|S|>1$, $x,y \notin S$. Then for every $v \in V$, $e_{S}(v)=1$ and therefore $C_{S}(K_n-e)= V$.
\end{enumerate}
Now we have found the centers of all types of subsets of $V$ and therefore above mentioned sets are precisely the center sets of $K_n-e$.
\end{proof}
Now we shall identify the center sets of wheel graphs. The wheel graph $W_4$ is $K_4$ and their center sets have already been identified. First we prove the case for $ n \ge 6$. The center sets of $W_5$, the only remaining case, will be given in the remark after the Proposition~\ref{wn}.

\begin{proposition}\label{wn}
Let $W_n$, $n \ge 6$, be wheel graph on the vertex set $\{v_1,\ldots,v_n\}$ where $v_n$ is the universal vertex. Then the center sets of $W_n$ are
\begin{enumerate}
\item $\{v_i\}$, $1 \le i \le n$
\item $\{v_i,v_n\}$, $1 \le i \le n-1$
\item $\{v_i,v_j,v_n\}$, \text{where} $v_iv_j \in E(C_{n-1})$
\item $\{v_i,v_j,v_k,v_n\}$ \text{where} $v_iv_j,v_jv_k \in E(C_{n-1}) $
\end{enumerate}
\end{proposition}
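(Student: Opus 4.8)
The plan is to proceed exactly as in Propositions~\ref{kmn} and~\ref{kn-e}: first show each of the four families listed is realized as $C_S(W_n)$ for a suitable profile $S$, then argue that these exhaust all possibilities by checking every ``shape'' of profile. The key geometric fact to use throughout is that in $W_n$ with $n\ge 6$ the universal vertex $v_n$ has eccentricity $1$, while every rim vertex $v_i$ has eccentricity $2$ (it is at distance $2$ from any rim vertex that is neither $v_i$ nor one of its two $C_{n-1}$-neighbours, and such a vertex exists since $n-1\ge 5$). Consequently, for a profile $S$, the $S$-eccentricity of a rim vertex $v_i$ is $1$ precisely when $S\subseteq N[v_i]=\{v_n,v_{i-1},v_i,v_{i+1}\}$ (indices mod $n-1$ on the rim), and is $2$ otherwise; and $e_S(v_n)=1$ whenever $S\ne\{v_n\}$, while $e_S(v_n)=0$ when $S=\{v_n\}$.

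**Realizing the four families.** For a singleton $S=\{v_i\}$ we get $C_S=\{v_i\}$ as always. For $\{v_i,v_n\}$ with $i\le n-1$: take $S$ to be any set of size $\ge 2$ with $S\subseteq\{v_n,v_{i-1},v_i,v_{i+1}\}$ but $S\not\subseteq N[v_j]$ for $j\ne i$ — e.g.\ $S=\{v_i,v_n\}$ itself works, giving $e_S(v_i)=1=e_S(v_n)$ and $e_S=2$ on all other rim vertices, so $C_S=\{v_i,v_n\}$. For family (3), $\{v_i,v_j,v_n\}$ with $v_iv_j\in E(C_{n-1})$: note $N[v_i]\cap N[v_j]=\{v_n,v_i,v_j\}$ when $v_i,v_j$ are adjacent on the rim (here again $n\ge 6$ is what prevents extra common neighbours), so taking $S=\{v_i,v_j\}$ gives $e_S(v_i)=e_S(v_j)=1$, $e_S(v_n)=1$, and $e_S=2$ elsewhere, hence $C_S=\{v_i,v_j,v_n\}$. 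For family (4), $\{v_i,v_j,v_k,v_n\}$ with $v_iv_j,v_jv_k\in E(C_{n-1})$ a path of length $2$ on the rim: the vertices at distance $\le 1$ from all of $v_i,v_j,v_k$ are exactly $v_i,v_j,v_k,v_n$ (using $n\ge 6$ — for $n=6$ this is tight and is exactly why $W_5$ is excluded), so $S=\{v_i,v_j,v_k\}$ yields $C_S=\{v_i,v_j,v_k,v_n\}$.

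**Exhaustiveness.** Now I classify an arbitrary profile $S$. If $|S|=1$, $C_S$ is that singleton. If $S=\{v_n\}\cup S'$ with $S'$ a nonempty set of rim vertices, or $S$ consists only of rim vertices with $|S|\ge 2$, then every vertex has $S$-eccentricity $1$ or $2$, so $C_S=\{v\in V : S\subseteq N[v]\}$. This set always contains $v_n$ when it is the $S$-center of a profile of size $\ge 2$ (since $e_S(v_n)=1$ is minimum) unless... — more carefully, $v_n\in C_S$ iff there is \emph{some} vertex of $S$-eccentricity $2$, i.e.\ iff $S$ is not contained in a single $N[v_i]$; but I must also handle the case where $C_S$ consists of rim vertices only, which happens exactly when $S$ fails to force $v_n$ in, i.e.\ when all of $V$ has $S$-eccentricity $\le 2$ and the minimum is attained — here one shows $\bigcap_{v\in S}N[v]$, when $S$ lies inside one closed neighbourhood $N[v_i]$, equals $\{v_i\}$, $\{v_i,v_j\}$ (one rim-neighbour and $v_n$), $\{v_i,v_j,v_n\}$, or $\{v_i,v_j,v_k,v_n\}$ depending on whether $S\setminus\{v_n\}$ is one vertex, two adjacent rim vertices, or a length-$2$ rim path — matching exactly families (1)--(4). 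The main obstacle is the bookkeeping of which common-neighbourhood intersections $\bigcap_{v\in S}N[v]$ can arise as $S$ ranges over subsets of a fixed $N[v_i]=\{v_{i-1},v_i,v_{i+1},v_n\}$: one must verify that the only achievable intersections are the four listed types and that larger or differently-shaped rim subsets either sit in no common $N[v_i]$ (giving one of families (1)--(4) by the argument above with $v_n$ included) or collapse to a previously-listed case. This is routine case analysis but is where all the $n\ge 6$ hypotheses get used, and it is worth isolating the claim ``$\bigcap_{i\in T}N[v_i]$ for $T$ a set of rim indices is nonempty iff $T$ is contained in a closed rim-neighbourhood, and then it is $\{v_n\}\cup(\bigcap_{i\in T}N_{C_{n-1}}[i])$'' as the engine of the proof.
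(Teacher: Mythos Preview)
Your overall plan matches the paper's, but you have swapped the witnesses for families (2) and (4), and in each case the claimed computation is wrong.

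For family (2) you propose $S=\{v_i,v_n\}$ and assert $e_S=2$ on all rim vertices other than $v_i$. That is false: both $v_{i-1}$ and $v_{i+1}$ are adjacent to $v_i$ and to $v_n$, so $e_S(v_{i-1})=e_S(v_{i+1})=1$ and hence $C_S=\{v_{i-1},v_i,v_{i+1},v_n\}$, which is a family-(4) set, not $\{v_i,v_n\}$. (Your abstract description ``$S\subseteq N[v_i]$ but $S\not\subseteq N[v_j]$ for $j\ne i$'' is the right criterion; the example just does not meet it because $\{v_i,v_n\}\subseteq N[v_{i-1}]$.) The paper uses $S=\{v_{i-1},v_i,v_{i+1}\}$ here.

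For family (4) you propose $S=\{v_i,v_j,v_k\}$ with $v_i,v_j,v_k$ three consecutive rim vertices and claim that all three are at distance $\le 1$ from each of the others. But $d(v_i,v_k)=2$, so $e_S(v_i)=e_S(v_k)=2$; only $v_j$ and $v_n$ have $S$-eccentricity $1$, and $C_S=\{v_j,v_n\}$, a family-(2) set. The paper uses $S=\{v_j,v_n\}$ (your family-(2) example) to realize family (4).

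So the fix is literally to interchange your two examples. Once that is done, your exhaustiveness argument---reducing to computing which $v$ satisfy $S\subseteq N[v]$---is sound, though the paper's version is somewhat crisper: it observes directly that $v_n\in C_S$ whenever $|S|\ge 2$, hence any center set of size $\ge 2$ contains $v_n$ with $e_S(v_n)=1$; then two rim vertices $v_i,v_j$ can both lie in $C_S$ only if they have $S$-eccentricity $1$, forcing $d_{C_{n-1}}(v_i,v_j)\le 2$, and if $v_i,v_{i+2}\in C_S$ then $v_{i+1}\in C_S$ as well. This avoids the case bookkeeping in your last paragraph.
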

\begin{proof}
First we shall prove that each of the sets described above are center sets. In the proof $i+k$ (or $i-k$) means $i+k-(n-1)$(or $i-k+(n-1)$) when $i+k >n-1$(or $i-k <1$).
\begin{enumerate}
\item For  $1 \le i \le n$, $C_{\{v_i\}}(G)=\{v_i\}$.
\item Let $S=\{v_{i-1},v_i,v_{i+1}\}$. $e_S(v_i)=e_S(v_n)=1$ and $e_S(v)=2$ for all other $v \in V$ and therefore $C_S(G)=\{v_i,v_n\}$.
\item For $S=\{v_i,v_{i+1},v_n\}$, $C_{S}(G)=S=\{v_i,v_{i+1},v_n\}$.
\item For $S=\{v_i,v_n\}$, $C_S(G)=\{v_{i-1},v_{i},v_{i+1},v_n\}$.
\end{enumerate}
For all $S \subseteq V$ such that $S \neq \{v_n\}$, $e_S(v_n) =1$ and hence for all $S \subseteq V$ such that $S\neq \{v_i\}$, $1 \le i \le n-1$, $v_n \in C_S(G)$. Now, let $A$ be such that $A$ contain $v_i$ and $v_j$ such that $d_{C_{n-1}}(v_i,v_j)>2$. Let $S \subseteq V$ be such that $C_S(G)= A$ then obviously $S\neq \{v_i\}$, $1 \le i \le n$. We have $v_n \in C_S(G)$ with $e_S(v_n)=1$ Therefore $v_i$ and $v_j$ belong to $C_S(G)$ implies there exist a vertex $v_k$ in $V(C_{n-1})$ such that $d(v_i,v_k)=d(v_j,v_k)=1$ which is impossible by the choice of $v_i$ and $v_j$. Hence $v_i$ and $v_j$ of $V(C_{n-1})$ belong to a center set implies $d_{C_{n-1}}(v_i,v_j) \le 2$. Also $v_i$, $v_{i+2}$ belong to $C_S(G)$ implies $v_{i+1}$ belong to $C_S(G)$. Hence the center sets are precisely those described in the theorem.
\end{proof}

\begin{remark}

Let $\{v_1,v_2,v_3,v_4,v_5\}$ be the vertex set of $W_5$ with $v_5$ as the universal vertex. All sets of the types given in the Proposition \ref{wn} are center sets in the same manner. Since the outer cycle is of length $4$, $C_{\{v_1,v_3\}}(W_5)=\{v_2,v_4,v_5\}$ and   $C_{\{v_2,v_4\}}(W_5)=\{v_1,v_3,v_5\}$. By the arguments similar to that given in the proof of Proposition \ref{wn}, the center sets of $W_5$ are precisely
\begin{enumerate}
\item $\{v_i\}$, $1 \le i \le5$
\item $\{v_i,v_5\}$, $1 \le i \le 4$
\item $\{v_i,v_j,v_5\}$, where $v_iv_j \in E(C_4)$
\item $\{v_i,v_j,v_k,v_n\}$ where $v_iv_j,v_jv_k \in E(C_4) $
\item $\{v_1,v_3,v_5\}$, $\{v_2,v_4,v_5\}$
\end{enumerate}
\end{remark}
\begin{remark}
The induced subgraphs of all center sets of a wheel graph are connected. Infact the induced subgraphs of all centersets of any graph with a universal vertex are connected.
\end{remark}
 \begin{theorem}
  Let $C_{2n+1}$ be an odd cycle with vertex set $V=\{v_1,\ldots v_{2n+1}\}$. An $A \subseteq V$ is a center set of $C_{2n+1}$ if and only if either $A=V$ or $A$ does not contain a pair of alternate vertices. \label{odd}
 \end{theorem}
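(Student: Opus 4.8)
The plan is to characterize center sets of the odd cycle $C_{2n+1}$ by first understanding how $S$-eccentricities behave on a cycle. For a vertex $v$ and a profile $S$, $e_S(v) = \max_{x \in S} d(v,x)$, where $d$ is the cycle metric. The diameter of $C_{2n+1}$ is $n$. The key structural observation I would establish first is: on an odd cycle, for any nonempty $S$, the set $C_S(C_{2n+1})$ is always a set of \emph{consecutive} vertices (an arc), because the $S$-eccentricity function, restricted to the cycle, is ``unimodal'' in an appropriate sense — as you traverse the cycle it decreases to a minimum on a contiguous arc and increases again. Actually more carefully: $e_S(v)$ for a single point source $x$ is $d(v,x)$ which increases then decreases; taking a max over $x\in S$ preserves the property that the sublevel set $\{v : e_S(v) \le k\}$ is an arc (intersection of complements of arcs around the $\bar x$'s). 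So $C_S$ is an arc. Then I would note $|C_S| \le n+1$ unless $C_S = V$, and in fact if $C_S \neq V$ then $C_S$ is a \emph{proper} arc, which by definition contains no pair of antipodal/alternate vertices — here I need to pin down precisely what ``a pair of alternate vertices'' means (vertices $v_i, v_{i+2}$ in cyclic order, i.e. at distance $2$), and show a proper arc of an odd cycle indeed cannot contain two such vertices without containing the vertex between them — wait, that's automatic for an arc. Let me reconsider: the condition must be that $A$, as an arbitrary vertex subset, does not contain two vertices that are ``alternate'', meaning consecutive in the cyclic order of the odd cycle after deleting... I'll interpret ``alternate vertices'' as a pair $v_i,v_{i+2}$; since $2n+1$ is odd, the relation ``differ by $2$'' generates the whole cycle, so ``no pair of alternate vertices'' forces $A$ to be an arc (interval) of length $\le n+1$... no, of length such that no two vertices are two apart, meaning $A$ is contained in a set of vertices that are pairwise not-at-distance-$2$; but on a path/arc, consecutive vertices are at distance $1$, so that's fine, the forbidden thing is jumping by exactly $2$. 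So $A$ avoiding alternate pairs is equivalent to $A$ being a set of consecutive vertices (an arc) together with possibly length constraints. I would make this combinatorial equivalence precise as a lemma: $A \subsetneq V$ contains no pair of alternate vertices iff $A$ is an arc of consecutive vertices. Hmm, an arc $\{v_i, v_{i+1}, v_{i+2}\}$ does contain the alternate pair $v_i, v_{i+2}$! So that's wrong too. The correct reading: since the step-by-$2$ map is a single cycle on $2n+1$ vertices, ``$A$ contains no pair of alternate vertices'' means $A$ is an independent set in the cycle $C'$ formed by the distance-$2$ relation, equivalently $A$ is a set of consecutive vertices of length at most... an independent set in a $(2n+1)$-cycle has size at most $n$. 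So the claim is $A$ is a center set iff $A = V$ or $A$ is independent in the ``alternate'' cycle. I would therefore reframe: prove $A$ is a center set iff $A=V$ or no two vertices of $A$ are at distance exactly $2$ in $C_{2n+1}$.

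Given that reframing, the forward direction: suppose $A = C_S(C_{2n+1})$ for some $S$ with $A \neq V$. I want to show no two vertices of $A$ are at distance $2$. The reason is the parity/antipodality structure: if $v_i$ and $v_{i+2}$ both lie in $C_S$, they have equal $S$-eccentricity $k$; the vertex $v_{i+1}$ between them satisfies $d(v_{i+1}, x) \le \max(d(v_i,x), d(v_{i+2},x))$ for every $x$ (since $v_{i+1}$ is ``between'' them on one side; need to check the other side too using oddness — on an odd cycle, for any $x$, one of the two arcs from $v_i$ to $v_{i+2}$ through $v_{i+1}$ realizes the distance appropriately), so $e_S(v_{i+1}) \le k$, hence $v_{i+1} \in C_S$. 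More importantly I want to derive that $C_S$ being an arc, if it contains $v_i$ and $v_{i+2}$ it contains $v_{i+1}$ and is ``long''; pushing this I'd show that containing an alternate pair forces $C_S = V$. Concretely: if $v_i, v_{i+2} \in C_S$ with common $S$-eccentricity $k$, then for the two ``witnesses'' $x, x' \in S$ with $d(v_i,x) = k = d(v_{i+2}, x')$, an argument on the odd cycle shows $x$ and $x'$ are positioned so that every vertex has $S$-eccentricity $k$ — essentially $x$ is near-antipodal to $v_i$ and $x'$ near-antipodal to $v_{i+2}$, and on the odd cycle these two near-antipodes ``cover'' all vertices at distance exactly $k$ or conclude $k=n$ forcing self-centeredness... this needs the precise inequality $d(v_i, x)+ d(v_{i+2},x) \ge$ something. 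I expect this to be the main obstacle: carefully doing the case analysis on the cyclic positions to show ``$C_S$ contains an alternate pair $\Rightarrow C_S = V$''.

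For the converse, I would construct, for each admissible $A \neq V$ (i.e. each nonempty $A$ with no two vertices at distance $2$ — equivalently each arc of consecutive vertices of length $\le n$, since an independent set in the alternate cycle that is also allowed to be non-arc... wait, need to double check: can an admissible $A$ be disconnected as a set of cycle vertices? E.g. $\{v_1, v_4\}$ — distance $3$, not $2$, so admissible, but not an arc). Hmm, so admissible sets include things like $\{v_1, v_4\}$. Then I need a profile $S$ with $C_S = \{v_1, v_4\}$, which seems impossible if $C_S$ must be an arc! So either my ``$C_S$ is always an arc'' claim is false, or the theorem statement's ``alternate vertices'' means something narrower. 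Re-reading: the theorem says $A$ is a center set iff $A = V$ or $A$ does not contain a pair of alternate vertices. If $C_S$ must be an arc but $\{v_1,v_4\}$ is not an arc, then $\{v_1,v_4\}$ is not a center set, contradicting the stated iff — unless $\{v_1, v_4\}$ \emph{does} contain alternate vertices under their definition. I think the intended resolution is that the center sets of an odd cycle are exactly: $V$, and all arcs (sets of consecutive vertices) of length $\le n$; and the combinatorial fact is that a proper subset-arc is precisely a set containing no alternate pair \emph{when the set is required to be realizable}, OR the authors define ``alternate vertices'' relative to the induced cyclic structure differently. Rather than resolve this ambiguity in the plan, I would: (i) prove $C_S$ is always an arc or all of $V$ (the unimodality lemma); (ii) prove every arc of length $\le n$ is realized by an explicit profile — e.g. to get the arc $v_1, \dots, v_m$ ($m \le n$), take $S$ to be a suitable set of vertices ``antipodal'' to the two ends, such as $S = \{v_{1+n}, v_{m+n+1}\}$ appropriately chosen so that both endpoints get the minimum eccentricity — I'd verify $e_S(v_j)$ for $1 \le j \le m$ equals some constant $c$ and $e_S(v_j) > c$ outside; (iii) translate ``arc of length $\le n$'' into the ``no alternate pair'' language via a short combinatorial lemma, taking care with what the paper means. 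The main obstacle remains step (i)'s sharp form (proper arc vs. $V$) and matching it to the alternate-vertices phrasing; the explicit constructions in (ii) are routine once the right antipodal profile is guessed.

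Therefore my outline is: \textbf{Lemma A} (sublevel sets of $e_S$ on $C_{2n+1}$ are arcs, hence $C_S$ is an arc or $V$); \textbf{Lemma B} (if the arc $C_S$ has length $> n$ then $C_S = V$ — uses oddness); \textbf{Lemma C} (combinatorial: a proper nonempty arc is exactly a set with no pair of alternate vertices / conversely); then \textbf{forward direction} of the theorem from Lemmas A–C, and \textbf{converse} by the explicit antipodal-profile construction realizing each short arc. The step I expect to fight with is Lemma B and, relatedly, making sure the ``no alternate pair'' characterization in Lemma C is stated so that it genuinely matches the realizable arcs — I'd resolve this by checking small cases ($C_5$, $C_7$) against the proposition before committing to the phrasing.
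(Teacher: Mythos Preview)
Your central structural claim, ``Lemma A: $C_S$ is always an arc (or all of $V$)'', is false, and this is exactly the point where your plan breaks down. On a cycle, the sublevel set $\{v : d(v,x) \le k\}$ for a single $x$ is indeed an arc, but the intersection of several arcs on a cycle need \emph{not} be an arc. Concretely, in $C_7$ take $S=\{v_2,v_3,v_6\}$: a direct computation gives $e_S(v_1)=e_S(v_4)=2$ and $e_S(v_j)=3$ for all other $j$, so $C_S=\{v_1,v_4\}$, which is not an arc. You actually spotted this tension yourself when you considered $\{v_1,v_4\}$, but you resolved it the wrong way --- by doubting the meaning of ``alternate vertices'' and committing to prove an arc characterization --- rather than by abandoning Lemma A. The theorem means exactly what it says: ``alternate'' is distance~$2$, and the center sets are precisely $V$ together with \emph{all} subsets (connected or not) avoiding distance-$2$ pairs.

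The paper's proof is accordingly quite different from your outline. For the forward direction it does not use any global arc/unimodality structure; instead it argues locally in two cases: (i) if three consecutive vertices $v_1,v_2,v_3$ all lie in $C_S$, then chasing the witnesses in $S$ forces $C_S=V$; (ii) if $v_1,v_3\in C_S$ but $v_2\notin C_S$, then the witness for $e_S(v_2)>e_S(v_1)=e_S(v_3)$ would have to be the antipode of $v_2$ at distance $n$ from both $v_1$ and $v_3$, impossible on an odd cycle. For the converse, your two-point antipodal profile only realizes arcs; the paper instead takes $S$ to be the set of all vertices that are eccentric to some vertex of $A^c$ but to no vertex of $A$, and checks that this gives $e_S\equiv n-1$ on $A$ and $e_S\equiv n$ on $A^c$. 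That construction is what you are missing to handle non-arc admissible sets like $\{v_1,v_4\}$.
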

 \begin{proof}
 If $A=V$ then it is a center set namely, of itself. So assume $A \neq V$. Let $A \subset V$ be such that it contain three consecutive vertices say, $v_1,v_2,v_3$. Assume there exists an $S \subset V$ with $A =C_S(G)$. Let $d$ be the $S$-eccentricity of a vertex of $A$. Then there exists a vertex $v_i$ in $S$ such that $d(v_1,v_i)=d$. $d(v_2,v_i)=d$ implies $v_1$ and $v_2$ are the eccentric vertices of $v_i$ which means $d=n$ or $A=V$. Hence $d(v_2,v_i) \ne d$. $d(v_2,v_i)=d+1$ implies $e_S(v_2)\ge d+1$. Hence $d(v_2,v_i)=d-1$. Then there exists a vertex $v_j$ such that $d(v_2,v_j)=d$ and $d(v_1,v_j)=d-1$. Then as explained above $d(v_3,v_j)$ cannot be $d$ and therefore $d(v_3,v_j)=d+1$. This means that $e_S(v_2) \ne e_S(v_3)$. Hence any three consecutive vertices cannot be in a center set. Now, assume that $A \subset V$ is such that it contains a pair of alternate vertices and does not contain the middle vertex, say, contains $v_1$ and $v_3$ and does not contain $v_2$. Assume $A =C_S(G)$. Let $e_S(v_1)=e_S(v_3)=d$. Then $e_S(v_2)=d+1$. Let $v_i$ be a vertex in $S$ such that $d(v_2,v_i)=d+1$. Obviously $d(v_1,v_i)=d(v_3,v_i)=d$ and this implies $v_i$ is the eccentric vertex of $v_2$ or $d(v_2,v_i)=n$. But since $C_{2n+1}$ is an odd cycle either $d(v_1,v_i)=n$ or $d(v_3,v_i)=n$, a contradiction. Hence if $A$ is a center set then it cannot contain a pair of alternate vertices.\\
Conversely assume that $A$ is such that it does not contain any pair of alternate vertices of the cycle. Now take $S$ to be the set of all vertices of $C_{2n+1}$ which are eccentric vertices of vertices of $A^c$ and which are not eccentric vertices of  any of the vertices of $A$. It is obvious by the choice of $A$ that such vertices do exist. Since an antipodal vertex of atleast one of the two neighbours of each vertex of $A$ belong to $S$ and none of the antipodal vertices of any vertex of $A$ belong to $S$, for each vertex $x$ of $A$, $e_S(x)=n-1$. Since atleast one of the antipodal vertices of each vertex of $A^c$ belong to $S$, for each vertex $y$ of $A^c$, $e_S(y)=n$. Thus $A =C_S(G)$. Hence the theorem.
\end{proof}
\begin{corollary}
For the odd cycle $C_{2n+1}$, if $A$ is a center set then either $|A| \le n $ or $|A|=2n+1$.\label{oddcent}
\end{corollary}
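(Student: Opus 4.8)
The plan is to deduce this immediately from Theorem~\ref{odd}. If $A=V$ then $|A|=2n+1$ and we are done, so suppose $A$ is a center set of $C_{2n+1}$ with $A\neq V$. By Theorem~\ref{odd}, $A$ then contains no pair of alternate vertices; that is, for every $i$ (indices read modulo $2n+1$) the set $A$ does not contain both $v_i$ and $v_{i+2}$.

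Next I would reformulate this as an independence condition in an auxiliary graph. Let $H$ be the graph on the vertex set $\{v_1,\ldots,v_{2n+1}\}$ in which $v_i$ is adjacent to $v_{i+2}$ for every $i$. The forbidden configuration for a center set $A$ is precisely an edge of $H$, so $A$ is an independent set of $H$. The key point is that, since $2n+1$ is odd, $\gcd(2,2n+1)=1$, so stepping repeatedly by $2$ around the cycle visits every vertex before returning to the start: the walk $v_1,v_3,v_5,\ldots,v_{2n+1},v_2,v_4,\ldots,v_{2n},v_1$ is a Hamiltonian cycle of $H$, and since each vertex of $H$ has degree exactly $2$, this shows $H$ is itself a single cycle of length $2n+1$.

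Finally I would invoke the elementary fact that the independence number of the odd cycle $C_{2n+1}$ is $n$ (any independent set misses at least one of every two consecutive vertices along the cycle). Applying this to $H$ yields $|A|\le n$, which completes the proof. The only substantive step is the middle one — observing that "stepping by two" around an odd cycle produces another $(2n+1)$-cycle rather than a disconnected graph — and once that is noted the corollary is immediate; I do not anticipate any real obstacle beyond recording this observation carefully. (One may also note, although it is not needed here, that the bound is sharp: by the converse direction of Theorem~\ref{odd} there exist center sets of size exactly $n$.)
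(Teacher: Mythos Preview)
Your proof is correct and takes a genuinely different route from the paper's. The paper argues directly with the circular binary string encoding of $A$: since no pattern $101$ or $111$ can occur, each maximal block of $1$'s has length at most two and must be followed by at least two $0$'s, from which one counts that $m$ zeros can accommodate at most $m-1$ ones (the parity of $2n+1$ being what forces the strict inequality), giving $|A|\le n$. Your argument instead packages the ``no alternate pair'' condition as independence in the step-$2$ circulant $H$, observes that $H\cong C_{2n+1}$ because $\gcd(2,2n+1)=1$, and then quotes $\alpha(C_{2n+1})=n$. Your approach is cleaner and explains conceptually why the bound is exactly $n$ and why oddness matters (for an even cycle the step-$2$ graph splits into two components, which is why Corollary~\ref{evencent} has a different flavour). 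The paper's approach, on the other hand, is self-contained and sets up the binary-string bookkeeping that it reuses later in Section~\ref{enumeration} when enumerating center sets; so each has its merits.
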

\begin{proof}
Suppose $A$ is a center set such that $|A| < 2n+1$. To prove $|A| \le n$. Since $A$ is a center set $A$ cannot contain a pair of alternate vertices of the cycle. Let each vertex belonging to $A$ be represented by $1$ and each vertex not belonging to $A$ be represented by $0$. Thus we get a circular arrangement of $0$'s and $1$'s such that two successive $0$'s contains at most two $1$'s between them and if a pair of successive zeros contain a $1$ between them then the next pair of successive 0's does not contain a $1$ between them. From this as well as the fact that the cycle contains odd number of vertices we can conclude that $m$ $0$'s can accommodate at most $(m-1)$ $1$'s between them. Therefore if $A'$ is a center set such that $A'$ has a maximum cardinality among the center sets other than $V$ then the binary representation of $A'$ will have exactly $n+1$ 0's and hence $n$ $1$'s. In other words $|A'|=n$.
\end{proof}
\begin{corollary}
For any $m \le n$, there exists an $S \subseteq V(C_{2n+1})$ such that $|C_S(C_{2n+1})|=m$.\label{realodd}
\end{corollary}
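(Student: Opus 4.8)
The plan is to reduce everything to Theorem~\ref{odd}. By that theorem, a set $A\subseteq V=V(C_{2n+1})$ is a center set as soon as $A$ contains no pair of alternate vertices, i.e.\ no two vertices at cyclic distance exactly $2$; so it suffices to exhibit, for each $m$ with $1\le m\le n$, such a set $A$ with $|A|=m$, since then any profile $S$ with $C_S(C_{2n+1})=A$ witnesses $|C_S(C_{2n+1})|=m$.

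For the construction I would take $A$ to be a disjoint union of \emph{blocks}, each block consisting of one or two consecutive vertices of the cycle, using $\lceil m/2\rceil$ blocks in all: all of size $2$ when $m$ is even, and all of size $2$ except for one singleton when $m$ is odd. I would arrange these blocks around $C_{2n+1}$ so that every two consecutive blocks are separated by at least two vertices of $V\setminus A$. Then $A$ has no alternate pair: two vertices lying in the same block are at distance $0$ or $1$, while two vertices lying in different blocks are separated by a gap of at least two vertices and hence are at distance at least $3$. Thus Theorem~\ref{odd} applies and $A$ is a center set.

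What must be checked is that such a placement fits inside a cycle of length $2n+1$ precisely for $m\le n$. With $b=\lceil m/2\rceil$ blocks on the cycle there are $b$ gaps, and the $2n+1-m$ vertices of $V\setminus A$ must be distributed among them with at least two per gap, which is possible exactly when $2n+1-m\ge 2b$. Substituting $b=m/2$ for even $m$ and $b=(m+1)/2$ for odd $m$, and using that $2n+1$ is odd, this inequality simplifies in both cases to $m\le n$; conversely Corollary~\ref{oddcent} already tells us that no center set other than $V$ can have more than $n$ vertices, so the stated range is exactly the right one.

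The only slightly delicate point is this last counting step, together with keeping the two parities of $m$ straight — and it is precisely here, from $2n+1$ being odd, that the bound comes out as $m\le n$ rather than $m\le n+1$. A more streamlined alternative sidesteps the explicit construction entirely: since $\gcd(2,2n+1)=1$, the graph on $V$ obtained by joining $v_i$ to $v_{i+2}$ is connected, hence is itself a cycle of length $2n+1$; a set has no pair of alternate vertices exactly when it is an independent set of this auxiliary cycle; and a $(2n+1)$-cycle has independent sets of every cardinality $0,1,\dots,n$. Taking such an independent set of size $m$ and invoking Theorem~\ref{odd} again completes the argument.
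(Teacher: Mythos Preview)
Your main construction is essentially the paper's own: the paper also builds the desired set by laying down blocks of one or two consecutive $1$'s separated by at least two $0$'s in the circular binary encoding (equivalently, avoiding the patterns $111$ and $101$), and the feasibility check $2n+1-m\ge 2\lceil m/2\rceil$ is exactly what makes the paper's explicit placement work for $m\le n$. Your second argument, however, is a genuinely different and cleaner route: observing that the ``alternate-pair'' relation on $V(C_{2n+1})$ is itself a $(2n+1)$-cycle (because $\gcd(2,2n+1)=1$) reduces the question to the existence of independent sets of every size $0,1,\dots,n$ in $C_{2n+1}$, which is immediate. This bypasses the parity bookkeeping entirely and makes the bound $m\le n$ transparent as the independence number of the auxiliary cycle, whereas the paper's (and your first) approach trades that conceptual clarity for an explicit, hands-on construction.
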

\begin{proof}
Let $m \le n$. Take $2n+1-m$ circularly arranged $0$'s. Number these $0$'s $1,2, \ldots, 2n+1-m$. If $m$ is even put two $1$'s each between the first and the second $0$'s, third and the fourth $0$'s etc up to $(m-1)^{th}$ and the $m^{th}$ $0$'s. If $m$ is odd put two $1$'s each between the first and the second $0$'s, third and the fourth $0$'s etc., up to $(m-2)^{th}$ and the $(m-1)^{th}$ $0$'s and one $1$ between $m^{th}$ and $(m+1)^{th}$ $0$'s. In both these cases we get a circular arrangement of $0$'s and $1$'s such that it has $m$ $1$'s and does not contain a pattern of the type $101$ or $111$. That is given an $m \le n$, we can find a subset of $V(C_{2n+1})$ of size $m$ which does not contain any pair of alternate vertices. In other words given an $m \le n$, we can find a center set of size $m$.
\end{proof}

The following theorem gives the center sets of some familiar classes of graphs such as even cycles, hypercubes etc.
\begin{theorem}
Let $G$ be a symmetric even graph. An $A \subseteq V$ is a center set if and only if either $A=V$ or there does not exist an $x$ in $V$ such that $\{x\} \cup N(x) \subseteq A$. \label{symevcent}
\end{theorem}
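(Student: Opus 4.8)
The plan is to convert the $S$-eccentricity into a plain distance-to-a-set, after which both implications become short. Write $d$ for the diameter of $G$, and for $u\in V$ let $\bar u$ be its unique eccentric vertex; since the eccentric vertex is unique and $d(\bar u,u)=d$, the map $u\mapsto\bar u$ is an involution of $V$. For $S\subseteq V$ write $\overline{S}=\{\bar w:w\in S\}$. The observation I would record first is that, for every $u\in V$ and every nonempty $S$, the symmetry relation $d(u,w)+d(u,\bar w)=d$ gives
\[
e_S(u)=\max_{w\in S}d(u,w)=d-\min_{w\in S}d(u,\bar w)=d-\operatorname{dist}(u,\overline{S}),
\]
where $\operatorname{dist}(u,\overline{S})=\min_{v\in\overline{S}}d(u,v)$. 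Hence $C_S(G)$ is exactly the set of vertices lying farthest from $\overline{S}$: with $M=\max_{v\in V}\operatorname{dist}(v,\overline{S})$ one has $C_S(G)=\{v:\operatorname{dist}(v,\overline{S})=M\}$. Establishing this identity is the heart of the matter and, I expect, the only genuinely substantive step.

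To show the stated condition is \emph{sufficient} for $A$ to be a center set: if $A=V$, take $S=V$, so $\overline{S}=V$, $\operatorname{dist}(\cdot,V)\equiv0$, and $C_V(G)=V$. If instead $A$ is a proper nonempty subset with no $x$ satisfying $\{x\}\cup N(x)\subseteq A$ (a boundary set), choose $S=\{\bar u:u\in V\setminus A\}$, so that $\overline{S}=V\setminus A$ and $e_S(u)=d-\operatorname{dist}(u,V\setminus A)$ for every $u$. For $u\notin A$ this equals $d$; for $u\in A$, the fact that $u$ is not an interior vertex of $A$ means it has a neighbour outside $A$, so $\operatorname{dist}(u,V\setminus A)=1$ and $e_S(u)=d-1$. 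Thus $C_S(G)=A$. The choice $S=\{\bar u:u\notin A\}$ is the one real trick: it works precisely because the boundary-set hypothesis unwinds to the statement that every vertex of $A$ is at distance $1$ from $V\setminus A$, which is exactly what makes $e_S$ constant on $A$.

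To show the condition is \emph{necessary}: let $A=C_S(G)$ with $A\neq V$. By the identity above $A=\{v:\operatorname{dist}(v,\overline{S})=M\}$ with $M$ the maximal such distance, and since $A\neq V$ some vertex is strictly closer to $\overline{S}$, forcing $M\ge1$ (if $M=0$ then $\overline{S}=V$ and $A=V$). Suppose, for contradiction, that some $x\in A$ has $\{x\}\cup N(x)\subseteq A$; pick $w\in\overline{S}$ with $d(x,w)=M$ and let $y$ be the neighbour of $x$ one step from $x$ along a shortest path from $x$ to $w$. Then $\operatorname{dist}(y,\overline{S})\le d(y,w)=M-1<M$, so $y\notin A$, contradicting $N(x)\subseteq A$. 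Hence no such $x$ exists.

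Finally I would flag the harmless convention $A\neq\emptyset$, since $C_S(G)$ is never empty. With that in place the argument is complete: the only nontrivial ingredient is the distance reformulation in the first paragraph, and each direction is then a two-line verification.
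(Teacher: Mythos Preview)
Your proof is correct, and in the sufficiency direction you make exactly the same choice as the paper: both take $S=\overline{V\setminus A}$ and verify that $e_S$ equals $d$ on $V\setminus A$ and $d-1$ on $A$. The difference is in the packaging. You isolate the identity $e_S(u)=d-\operatorname{dist}(u,\overline S)$ at the outset and thereby recast $C_S(G)$ as the set of vertices farthest from $\overline S$; both directions then become short statements about distance to a set, and in particular your necessity argument simply steps one edge toward $\overline S$. The paper never states this identity and instead works directly with $S$-eccentricities: for $x\in A$ with $e_S(x)=k<d$ realised by $y\in S$, it invokes the symmetric-even relation to produce a neighbour $z$ of $x$ (a step toward $\bar y$) with $d(z,y)=k+1$, hence $z\notin A$. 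The two arguments are dual through your identity, but your formulation is cleaner and more reusable, while the paper's is slightly more ad hoc. Your remark about the convention $A\neq\emptyset$ is also well taken; the paper does not address it.
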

\begin{proof}
Since symmetric even graphs are self centered $C_V(G)=V$. So assume $A \subset V$. Let $A$ be such that $A =C_S(G)$ for an $S \subset V$ and let $x \in A$. Suppose $e_S(x)=k$ with $d(x,y)=k$ where $y \in S$. If $k=diam(G)$ then $A=V$. So assume $k < diam(G)$. Then since $G$ is a symmetric even graph there exists a vertex $z$ adjacent to $x$ such that $d(y,z)=k+1$. Therefore $e_S(z)\ge k+1$ or $z \notin C_S(G)$. Hence if $A$ is a center set such that $A \subset V$, then there exists an $x$ in $A$ such that $\{x\} \cup N(x) \cap S^{c} \neq \emptyset$.\\

Conversely, suppose that $A \subset V$ satisfies the condition given in the theorem. We need to find out an $S \subseteq V$ such that $A =C_S(G)$. Since $G$ is symmetric even it is self centered and unique eccentric vertex. Let $\overline{A^c}$ denote the set of eccentric vertices of $A^c$. Let $x \in A$. Then there exists a $x'$ adjacent to $x$ such that $x' \in A^c$. Then $\overline{x'} \in \overline{A^c}$. Since $d(x',\overline{x'})= diam(G)$ and $x$ and $x'$ are adjacent $d(x,\overline{x'})= diam(G)-1$. Also since $G$ is unique eccentric vertex there does not exist an $z$ in $\overline{A^c}$ such that $d(x,z)=diam(G)$. Therefore, $e_{\overline{A^c}}(x)=diam(G)-1$ and for every $y \in A^c$, $e_{\overline{A^c}}(x')=diam(G)$. Since $G$ is self centered for every $x \in A$, $e_{\overline{A^c}}(x)=diam(G)-1$ and for every  $y \in A^c$, $e_{\overline{A^c}}(x')=diam(G)$. Therefore $C_{A^c}(G)=A$. Hence the theorem.
\end{proof}
\begin{corollary}
For the even cycle $C_{2n}$, if $A$ is a center set then either $|A| \le \lfloor\frac{4n}{3}\rfloor $ or $|A|=2n$. \label{evencent}
\end{corollary}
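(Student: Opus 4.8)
The plan is to combine Theorem~\ref{symevcent} with an elementary counting argument on the cycle. First I would record that the even cycle $C_{2n}$ is a symmetric even graph: its diameter is $n$, each vertex $v_i$ has the unique antipodal vertex $\overline{v_i}=v_{i+n}$ at distance $n$, and $d(v_i,v_j)+d(v_i,\overline{v_j})=n$ for all $i,j$ (a one-line check on arc lengths). Hence Theorem~\ref{symevcent} applies verbatim: if $A$ is a center set with $A\neq V$, then there is no vertex $x$ with $\{x\}\cup N(x)\subseteq A$. Since $N(v_i)=\{v_{i-1},v_{i+1}\}$ in $C_{2n}$, this says exactly that $A$ contains no three consecutive vertices of the cycle.

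Next I would turn ``no three consecutive vertices'' into the claimed bound on $|A|$. Put $t=|V\setminus A|=|A^c|=2n-|A|$; since $A\neq V$ we have $t\geq 1$. Reading the $t$ vertices of $A^c$ cyclically around $C_{2n}$, they split the remaining vertices into $t$ (possibly empty) arcs of consecutive vertices all lying in $A$, and the lengths of these arcs sum to $|A|=2n-t$. Because $A$ contains no three consecutive vertices, each arc has length at most $2$, whence $2n-t=|A|\leq 2t$, i.e.\ $t\geq 2n/3$. As $t$ is an integer, $t\geq\lceil 2n/3\rceil$, so $|A|=2n-t\leq 2n-\lceil 2n/3\rceil=\lfloor 4n/3\rfloor$, using the identity $a-\lceil b\rceil=\lfloor a-b\rfloor$ for integer $a$ (or, if one prefers, checking the three residues of $n$ modulo $3$ separately). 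This is the desired dichotomy: either $|A|=2n$ or $|A|\leq\lfloor 4n/3\rfloor$.

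I do not expect any genuine obstacle; the two mild points are the verification that $C_{2n}$ meets the hypotheses of Theorem~\ref{symevcent} and the harmless floor/ceiling bookkeeping above. It is also worth noting (though not required by the statement) that the bound is sharp: the arrangement of two chosen vertices followed by one omitted vertex, repeated around the cycle when $3\mid 2n$, and the obvious near-periodic variants otherwise, produces a center set of size exactly $\lfloor 4n/3\rfloor$ by Theorem~\ref{symevcent}, in the same spirit as the construction of Corollary~\ref{realodd} for odd cycles.
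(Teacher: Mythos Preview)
Your argument is correct and follows essentially the same route as the paper: invoke Theorem~\ref{symevcent} to conclude that a proper center set contains no three consecutive vertices, then observe that the $t=|A^c|$ complement vertices cut the cycle into $t$ arcs in $A$ of length at most $2$, giving $|A|\le 2t$ and hence $|A|\le 2n-\lceil 2n/3\rceil=\lfloor 4n/3\rfloor$. The paper phrases the same counting via a binary string of $0$'s and $1$'s (your arcs are exactly its runs of $1$'s between successive $0$'s), and your added remarks---the explicit check that $C_{2n}$ is symmetric even and the sharpness observation---are handled in the paper implicitly and in the following Corollary~\ref{realeven}, respectively.
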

\begin{proof}
Suppose $A$ is a center set such that $|A| < 2n$. To prove $|A| \le \lfloor\frac{4n}{3}\rfloor$. Since $A$ is a center set $A$ cannot contain three consecutive vertices of the cycle. As in Corollary \ref{oddcent}, form a binary representation corresponding to the set $A$. Thus we get a circular arrangement of $0$'s and $1$'s such that two successive $0$'s contains at most two $1$'s between them. From this we can conclude that $m$ $0$'s can accommodate at most $2m$ $1$'s between them.
If $A'\not= V$ is a center set of maximum cardinality then the binary representation of $A'$ will have exactly $\lceil\frac{2n}{3}\rceil$ zeros and hence $2n-\lceil\frac{2n}{3}\rceil$ $1$'s.
In other words $|A'|=2n-\lceil\frac{2n}{3}\rceil=\lfloor\frac{4n}{3}\rfloor$.  Since $A'$ is a center set of maximum cardinality, we have $|A|\le \lfloor\frac{4n}{3}\rfloor$.  Hence the corollary.
\end{proof}
Next we have another corollary similar to the Corollary \ref{realodd}.
\begin{corollary}
For any $m \le \lfloor\frac{4n}{3}\rfloor$, there exists an $S \subseteq V(C_{2n})$ such that $|C_S(C_{2n})|=m$.\label{realeven}
\end{corollary}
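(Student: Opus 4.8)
The plan is to explicitly construct, for each target size $m \le \lfloor\frac{4n}{3}\rfloor$, a profile $S$ whose $C_{2n}$-center has exactly $m$ vertices, exactly as was done in Corollary~\ref{realodd} for the odd case but now respecting the ``no three consecutive vertices'' constraint coming from Theorem~\ref{symevcent}. First I would recall from Corollary~\ref{evencent} that a center set $A \ne V$ of the even cycle corresponds precisely to a circular binary string (a $1$ for each vertex of $A$, a $0$ for each vertex not in $A$) in which no block of three consecutive $1$'s occurs, i.e. no pattern $111$; and conversely, the converse direction of Theorem~\ref{symevcent} guarantees that \emph{every} subset $A$ avoiding a configuration $\{x\}\cup N(x)$ — which on $C_{2n}$ is just three consecutive vertices — is realised as $C_{A^c}(C_{2n})$.

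So the real content is combinatorial: given $m \le \lfloor\frac{4n}{3}\rfloor$, exhibit a subset $A\subseteq V(C_{2n})$ with $|A|=m$ containing no three consecutive vertices. I would do this by the same ``pad with $0$'s, then insert $1$'s in pairs'' device used in Corollary~\ref{realodd}. Take $2n-m$ zeros arranged in a circle; I must distribute the $m$ ones into the $2n-m$ gaps between consecutive zeros so that each gap receives at most two ones (that is exactly the condition forbidding $111$, and also automatically forbidding any longer run). Since $m \le \lfloor\frac{4n}{3}\rfloor$ is equivalent to $m \le 2(2n-m)$ after a short calculation (writing $z = 2n-m$, the bound $m\le\lfloor 4n/3\rfloor$ becomes $m \le 2z$), there is enough room: place two ones in each of the first $\lfloor m/2\rfloor$ gaps, and if $m$ is odd place one further one in the next gap, leaving the remaining gaps empty. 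This yields a circular arrangement with exactly $m$ ones and no $111$ pattern, hence a subset $A$ of size $m$ with no three consecutive vertices. Then $S = A^{c}$ works, by Theorem~\ref{symevcent}, giving $|C_S(C_{2n})| = m$.

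The one point needing care — and the only real obstacle — is the boundary case $m = \lfloor\frac{4n}{3}\rfloor$ together with the parity of $n$ modulo $3$, where the inequality $m \le 2(2n-m)$ may be tight or off by one; there one must check the floor/ceiling bookkeeping so that the number of available gaps $2n-m$ really does accommodate $\lceil m/2 \rceil$ blocks (with the last block possibly a singleton). I would handle this by splitting into the cases $n \equiv 0,1,2 \pmod 3$ and verifying in each that $2(2n-m) \ge m$ with the extra slack needed for the odd-$m$ adjustment; the inequality $m\le\lfloor 4n/3\rfloor$ was precisely engineered in Corollary~\ref{evencent} to be the exact threshold, so this check will go through. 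With the construction in hand, the corollary follows immediately, and I would close by noting the parallel with Corollary~\ref{realodd}: in both cases, every admissible cardinality up to the extremal value (other than the full vertex set, which is always a center set) is actually attained.
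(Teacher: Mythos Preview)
Your proposal is correct and is essentially the same approach the paper intends: the paper's own proof is simply ``Similar to the proof of Corollary~\ref{realodd},'' and your construction---placing $2n-m$ zeros on a circle and inserting the $m$ ones in blocks of size at most two, then invoking Theorem~\ref{symevcent} to realise the resulting $111$-free subset as a center---is exactly that adaptation, with the inequality $m\le 2(2n-m)$ (equivalently $m\le\lfloor 4n/3\rfloor$) guaranteeing enough gaps. The only cosmetic difference is that Corollary~\ref{realodd} fills alternating gaps (to also avoid $101$) while you fill consecutive gaps; since the even-cycle constraint forbids only $111$, your simpler packing is fine, and your boundary-case bookkeeping is more careful than the paper's.
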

\begin{proof}
Similar to the proof of Corollary \ref{realodd}
\end{proof}
Next we prove certain results regarding the centers of some special subsets of $V$ in symmetric even graphs. Before that we have the following propositions from \cite{key18}.
\begin{proposition}
Every harmonic even graph is balanced.
\end{proposition}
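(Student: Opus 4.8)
The plan is to show that in an even graph the antipodal map $u \mapsto \bar u$ is a graph automorphism once the harmonic hypothesis is imposed; balancedness is then immediate since automorphisms preserve degrees.

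First I would establish that $\bar{\cdot}$ is an involution on $V(G)$. Indeed, for any $u$ we have $d(u,\bar u) = \mathrm{diam}(G)$, so $d(\bar u, u) = \mathrm{diam}(G)$ as well, which means $u$ is an eccentric vertex of $\bar u$ realizing the diameter. Since $G$ is even, $\bar u$ has a \emph{unique} such eccentric vertex, forcing $\bar{\bar u} = u$. Hence $\bar{\cdot}\colon V(G) \to V(G)$ is a bijection equal to its own inverse.

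Next I would check that this bijection both preserves and reflects adjacency. Preservation is exactly the harmonic hypothesis: $uv \in E$ implies $\bar u\,\bar v \in E$. For the converse, suppose $\bar u\,\bar v \in E$; applying the harmonic property to this edge gives $\overline{\bar u}\,\overline{\bar v} \in E$, and by the involution property from the previous step this says $uv \in E$. Therefore $uv \in E \iff \bar u\,\bar v \in E$, so $\bar{\cdot}$ is an automorphism of $G$. Consequently $\deg(u) = \deg(\bar u)$ for every $u \in V$, which is precisely the definition of $G$ being balanced.

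I do not anticipate a serious obstacle here; the only point requiring a little care is the verification that $\bar{\cdot}$ is an involution, since without it the harmonic condition alone would only give one direction of the adjacency equivalence. Everything else is a direct unwinding of the definitions of \emph{even}, \emph{harmonic}, and \emph{balanced} given in the introduction.
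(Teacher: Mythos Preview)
Your proof is correct. The involution step is handled carefully and is indeed the only subtle point: once you know $\bar{\bar u}=u$, the harmonic hypothesis upgrades to a full automorphism and balancedness drops out.

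Note, however, that the paper does not actually prove this proposition; it merely quotes it (together with the companion statement that symmetric even graphs are harmonic) from the reference \cite{key18}, and then combines the two to obtain Proposition~\ref{symev}. So there is no ``paper's own proof'' to compare against. That said, your argument---showing the antipodal map is a graph automorphism---is the standard route to this fact and is almost certainly what the cited source does as well.
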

\begin{proposition}
Every Symmetric even graph is harmonic.
\end{proposition}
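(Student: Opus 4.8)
The plan is to exploit the antipodal map $u\mapsto\bar u$ together with the symmetry identity. First I would record two structural facts about an even graph $G$ with $d:=diam(G)$. Since $\bar u$ is by definition the unique eccentric vertex of $u$ and $d(u,\bar u)=d$, every vertex has eccentricity $d$, so $G$ is self-centered. Moreover the map $u\mapsto\bar u$ is an involution, $\overline{\bar u}=u$: indeed $d(\bar u,u)=d(u,\bar u)=d$, so $u$ is an eccentric vertex of $\bar u$, and by the uniqueness clause in the definition of ``even'' it is the only one, whence $\overline{\bar u}=u$.

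Now let $uv\in E$, so $d(u,v)=1$. Applying the symmetry condition to the pair $(u,v)$ gives $d(u,v)+d(u,\bar v)=d$, hence $d(u,\bar v)=d-1$; applying it to the pair $(v,u)$ gives $d(v,u)+d(v,\bar u)=d$, hence $d(v,\bar u)=d-1$. Next apply the symmetry condition to the pair $(\bar u,\bar v)$: $d(\bar u,\bar v)+d(\bar u,\overline{\bar v})=d$. Using the involution $\overline{\bar v}=v$, the second term equals $d(\bar u,v)=d-1$, and therefore $d(\bar u,\bar v)=1$.

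Finally I would note that $\bar u\neq\bar v$: otherwise $u=\overline{\bar u}=\overline{\bar v}=v$, contradicting that $uv$ is an edge of a simple graph. Hence $d(\bar u,\bar v)=1$ forces $\bar u\bar v\in E$, which is exactly the harmonic property, and the proposition follows. I do not expect a genuine obstacle here; the only points requiring care are the justification that $\bar{\cdot}$ is a well-defined involution (this is precisely where the ``unique eccentric vertex'' hypothesis of an even graph is needed) and the small check $\bar u\neq\bar v$ ensuring $\bar u\bar v$ is an actual edge rather than a loop.
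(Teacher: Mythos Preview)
Your argument is correct. The key identity $d(\bar u,\bar v)=d-d(\bar u,\overline{\bar v})=d-d(\bar u,v)=d-(d-1)=1$, obtained by applying the symmetric condition at $(\bar u,\bar v)$ together with the involution $\overline{\bar v}=v$, is exactly the right computation, and your justification of the involution and of $\bar u\neq\bar v$ closes the remaining gaps cleanly.

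As for comparison with the paper: the paper does not actually prove this proposition. It is quoted, together with the companion statement that harmonic even graphs are balanced, from an external reference (the paper's \cite{key18}) and is used only to deduce that symmetric even graphs are balanced. So there is no ``paper's own proof'' to compare against; your self-contained argument simply fills in what the paper imports from the literature, and it is the standard route (antipodal involution plus two applications of the symmetry identity).
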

Combining the above two propositions we get the following proposition.
\begin{proposition}
Every Symmetric even graph is balanced. \label{symev}
\end{proposition}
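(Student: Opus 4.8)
The statement to prove is Proposition \ref{symev}: every symmetric even graph is balanced. The plan is to simply chain the two preceding propositions: a symmetric even graph is harmonic, and a harmonic even graph is balanced, so a symmetric even graph is balanced. Concretely, I would write: let $G$ be a symmetric even graph. By the preceding proposition (every symmetric even graph is harmonic), $G$ is harmonic. By the proposition before that (every harmonic even graph is balanced), $G$ is balanced, which is what we wanted.

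Since the paper explicitly says ``Combining the above two propositions we get the following proposition,'' the intended proof is essentially one sentence of transitive reasoning, and there is no real mathematical content to unpack here beyond invoking the two earlier results. If one wanted to make it slightly more self-contained, I could unfold the definitions: harmonic means $\bar u\bar v \in E$ whenever $uv\in E$, and balanced means $\deg(u)=\deg(\bar u)$ for each $u$; so harmonicity says the bar map carries edges at $u$ to edges at $\bar u$, and since $\bar{\bar u}=u$ (uniqueness of eccentric vertices in an even graph) this map is a bijection between $N(u)$ and $N(\bar u)$, forcing equal degrees. But given the two propositions are already quoted, that level of detail is unnecessary.

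The main (and only) obstacle is purely organizational rather than mathematical: making sure the two cited propositions are stated in the form actually needed, i.e. that ``harmonic even'' in the first proposition matches the hypothesis obtained from the second, and that the even-graph hypothesis (which guarantees the unique eccentric vertex $\bar u$ and the involution property $\bar{\bar u}=u$) is available throughout. Since symmetric even graphs are even by definition, this is immediate.

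\begin{proof}
Let $G$ be a symmetric even graph. By Proposition~\ref{symev} minus one (every symmetric even graph is harmonic), $G$ is harmonic. Being a symmetric even graph, $G$ is in particular an even graph, so $G$ is a harmonic even graph. Hence, by the proposition stating that every harmonic even graph is balanced, $G$ is balanced.
\end{proof}
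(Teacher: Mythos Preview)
Your proof is correct and follows exactly the paper's intended argument: the paper itself says the proposition is obtained by ``combining the above two propositions,'' and that is precisely what you do. The extra remark about the bar map giving a bijection $N(u)\to N(\bar u)$ is a nice sanity check but, as you note, unnecessary given the two quoted propositions.
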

\begin{theorem}
Let $G$ be a symmetric even graph and let $S \subseteq V$. Then $C_S(G)=\overline{S^c}$ if and only if $S$ is a dominating set.
\end{theorem}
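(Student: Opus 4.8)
The plan is to linearise $S$-eccentricities into distances-to-a-set. Put $D=\mathrm{diam}(G)$ and, for $S\subseteq V$, write $\bar S=\{\bar s:s\in S\}$ and $d(v,\bar S)=\min_{s\in S}d(v,\bar s)$; we may assume $\emptyset\neq S\subsetneq V$ (the statement is meant for a proper non-empty profile $S$ — note that $S=V$ would force $C_V(G)=V\neq\emptyset=\overline{S^c}$ even though $V$ is trivially dominating). First I would establish, for every $v\in V$, the identity $e_S(v)=D-d(v,\bar S)$: by the symmetry relation $d(v,s)+d(v,\bar s)=D$ we get $e_S(v)=\max_{s\in S}d(v,s)=\max_{s\in S}\bigl(D-d(v,\bar s)\bigr)=D-\min_{s\in S}d(v,\bar s)=D-d(v,\bar S)$. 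Hence, writing $M=\max_{v\in V}d(v,\bar S)$, we obtain $C_S(G)=\{v\in V:d(v,\bar S)=M\}$.

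Next I would rewrite $\overline{S^c}$ in the same language. Since $G$ is even, $u\mapsto\bar u$ is an involution, hence a bijection of $V$, so $\overline{S^c}=\{\bar u:u\notin S\}=\{w:\bar w\notin S\}=\{w:w\notin\bar S\}=V\setminus\bar S=\{v:d(v,\bar S)\geq 1\}$. Moreover, as $G$ is symmetric even it is harmonic (by the quoted Proposition), so $u\mapsto\bar u$ preserves adjacency; being an involution it is thus a graph automorphism, and consequently $S$ is a dominating set if and only if $\bar S$ is.

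With these two reductions the theorem becomes purely metric: $C_S(G)=\overline{S^c}$ says $\{v:d(v,\bar S)=M\}=\{v:d(v,\bar S)\geq 1\}$, and I would show this holds exactly when $M=1$. If $M=1$ the two sets plainly coincide. If $M\geq 2$, choose $w$ with $d(w,\bar S)=M$ and let $w'$ be the neighbour of $w$ on a shortest path from $w$ to $\bar S$; then $1\leq d(w',\bar S)=M-1<M$, so $w'\in\overline{S^c}\setminus C_S(G)$ and equality fails. Since $S$ is a proper non-empty subset, $\bar S$ is a proper non-empty subset, so automatically $M\geq 1$. Finally, $M\leq 1$ means every vertex lies in $\bar S$ or is adjacent to $\bar S$, i.e. $\bar S$ is dominating, i.e. (by the previous paragraph) $S$ is dominating; together with $M\geq 1$ this yields $M=1\iff S$ is a dominating set, which completes both directions.

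I do not foresee a genuine obstacle here: the identity $e_S(v)=D-d(v,\bar S)$ carries the whole argument, turning both $C_S(G)$ and $\overline{S^c}$ into level sets of the single function $v\mapsto d(v,\bar S)$. The only points needing care are the bookkeeping with the map $u\mapsto\bar u$ — it is used both as a bijection (to compute $\overline{S^c}$) and as an automorphism (to transfer the dominating property between $S$ and $\bar S$) — and the harmless exclusion of the degenerate profiles $S=\emptyset$ and $S=V$.
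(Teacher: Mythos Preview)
Your argument is correct and rather cleaner than the paper's. The crucial move is the identity $e_S(v)=D-d(v,\bar S)$, which simultaneously turns $C_S(G)$ into the set where $d(\cdot,\bar S)$ is maximised and $\overline{S^c}$ into the set where $d(\cdot,\bar S)\geq 1$, so that the whole theorem collapses to the triviality ``$\max d(\cdot,\bar S)=1$ iff $\bar S$ is dominating''. By contrast, the paper treats the two implications separately: for the converse it performs essentially the same eccentricity computation you do (without isolating your identity), showing $e_S(x)=D-1$ on $\overline{S^c}$ and $e_S(x)=D$ off it; but for the forward direction it argues by contradiction and invokes the earlier characterisation of center sets (Theorem~\ref{symevcent}) --- if $S$ failed to dominate some $x$, then $\bar x$ together with all its neighbours would land in $\overline{S^c}$, which that theorem forbids for a proper center set. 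Your route is self-contained (it does not need Theorem~\ref{symevcent}) and handles both directions at once; the only extra ingredient you use is that the antipodal map is a graph automorphism, which follows from the quoted fact that symmetric even graphs are harmonic. Your explicit caveat about $S=V$ is also a point the paper leaves implicit.
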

\begin{proof} Assume $C_S(G)=\overline{S^c}$. Suppose $S \cup N(S) \neq V$. Then there exists an $x \in V$ such that $x \notin S$ and $x \notin N(S)$. That is $x$ and all its neighbours belong to $S^c$. Let $x_1, \ldots, x_k$ be the neighbours of $x$. By proposition \ref{symev}, $deg(u)=deg(\bar{u})$. Let $y_1,y_2, \ldots, y_k$ be the neighbours of $\bar{x}$. We have $d(x_i,\bar{x})=diam(G)-1$ for $1 \le i \le k$. Since $G$ is symmetric even there exists a vertex adjacent to $\bar{x}$, say $y_i$, such that $d(x_i,y_i)=daim(G)$ for $1 \le i \le k$. Hence $\bar{x}$ and all its neighbours belong to $\overline{S^c}$. This contradicts the condition for $\overline{S^c}$ to be a center set.\\
Conversely suppose $S \cup N(S)=V$. Let $x \in \overline{S^c}$. Then $\bar{x} \in S^c$. Since $S \cup N(S) = V$, $\bar{x} \in N(S)$. Therefore there exists an $z \in S$ such that $z$ is adjacent to $\bar{x}$. Then $d(x,z)=daim(G)-1$. $d(x,z')=diam(G)$ for some $z' \in S$ implies  both $y  \in S^c$ and $z' \in S$ are the eccentric vertices of $x$ a contradiction to the fact that the graph is unique eccentric vertex. Hence $e_S(x)=diam(G)-1$. Now let $x \notin \overline{S^c}$. Then since every vertex is an eccentric vertex, $x \in \overline{S}$ and therefore there exists a $w$ in $S$ such that $d(x,w)=diam(G)$. Thus $C_S(G)=\overline{S^c}$.
\end{proof}
For a graph $G$, let $\mathcal{DB}(G)$ denote the class of dominating boundary sets, that is, dominating sets which are also boundary sets. We have the following theorem on the centers of sets which belong to such a class of sets in a symmetric even graph.
\begin{theorem}
Let $G$ be a symmetric even graph. Let $S \subseteq V$ be such that $S \in \mathcal{DB}(G)$. Then $C_S(G)=S'$ if and only if $C_{S'}(G)=S$.\label{css'}
\end{theorem}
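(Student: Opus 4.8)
The plan is to first translate the statement into a purely metric condition that holds in every symmetric even graph, and then read both implications off of that reformulation. Throughout write $d=\mathrm{diam}(G)$, let $\bar v$ denote the unique eccentric vertex of $v$, and for $A\subseteq V$ put $d(x,A)=\min_{a\in A}d(x,a)$, with $R(A)=\max_{x\in V}d(x,A)$ its covering radius. Since $G$ is even, $\bar{\bar v}=v$, so $v\mapsto\bar v$ is an involution of $V$; and from the symmetry identity $d(u,w)+d(u,\bar w)=d$, read in both orders, one gets $d(u,w)=d-d(\bar u,w)=d-d(u,\bar w)$ for all $u,w$. The first step is to deduce from this that, for any nonempty $S'\subseteq V$, $e_{S'}(v)=d-d(\bar v,S')$, so that $C_{S'}(G)$ is exactly the set of $v$ maximizing $d(\bar v,S')$; since $\bar{\,\cdot\,}$ is a bijection this is $C_{S'}(G)=\overline{\{x:d(x,S')=R(S')\}}$, and applying $\bar{\,\cdot\,}$ to the bracketed set together with $d(\bar y,w)=d(y,\bar w)$ and $(\overline A)^c=\overline{A^c}$ turns it into
\[
C_{S'}(G)=\{\,y: d(y,T)=R(T)\,\},\qquad T:=\overline{S'} .
\]
I would also record the elementary fact that a set $S$ is dominating iff $R(S)=1$ or $S=V$, and that $S$ is a boundary set iff every $x\in S$ has a neighbour outside $S$, i.e. $d(x,S^{c})=1$ for all $x\in S$.

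For the forward direction, suppose $C_S(G)=S'$. Since $S$ is dominating, the earlier theorem characterizing when $C_S(G)=\overline{S^c}$ gives $S'=\overline{S^c}$, hence $T=\overline{S'}=S^{c}$. Because $S$ is also a boundary set, $d(x,S^{c})=1$ for $x\in S$ and $d(x,S^{c})=0$ for $x\in S^{c}$; therefore $R(S^{c})=1$ (note $S\ne\emptyset$, being a center) and $\{y:d(y,S^{c})=1\}=S$. Plugging into the displayed formula yields $C_{S'}(G)=S$.

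For the converse, suppose $C_{S'}(G)=S$. First note $S'\ne\emptyset$ (an empty profile would force $C_{S'}(G)=V$, but $V$ has interior vertices so $S\ne V$), hence $T=\overline{S'}\ne\emptyset$, and the display gives $S=\{y:d(y,T)=R(T)\}$. If $R(T)=0$ then $T=V$ and $S=V$, impossible; so $R(T)\ge 1$ and thus $S\cap T=\emptyset$. Now pick $u_0\in T$: since $S$ is dominating there is $s\in S$ with $d(s,u_0)\le 1$, and since $s\in S$ we get $R(T)=d(s,T)\le d(s,u_0)\le 1$. Hence $R(T)=1$, which both makes $T$ a dominating set and forces $S=\{y:d(y,T)=1\}=V\setminus T$, i.e. $T=S^{c}$. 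Therefore $\overline{S'}=S^{c}$, so $S'=\overline{S^{c}}$, and since $S$ is dominating, $C_S(G)=\overline{S^{c}}=S'$, as required.

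The only genuinely delicate part is the reformulation in the first paragraph: getting the eccentricity formula $e_{S'}(v)=d-d(\bar v,S')$ exactly right and then juggling the three operations — the bar involution, complementation, and set‑distance — so that $\overline{A^c}=(\overline A)^c$, $\overline{\overline A}=A$ and $d(\bar y,w)=d(y,\bar w)$ are all used correctly. Once the displayed identity $C_{S'}(G)=\{y:d(y,T)=R(T)\}$ with $T=\overline{S'}$ is in hand, both directions collapse to the single observation that for $S\in\mathcal{DB}(G)$ the function $d(\cdot,S^{c})$ takes only the values $0$ (on $S^{c}$) and $1$ (on $S$), so its covering radius is forced to be $1$ and its "far set" is precisely $S$.
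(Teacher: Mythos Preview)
Your proof is correct, and the reformulation in the first paragraph is the heart of it: from the symmetry identity $d(v,s)=d-d(v,\bar s)$ one gets $e_{S'}(v)=d-d(v,\overline{S'})$, so $C_{S'}(G)=\{y:d(y,T)=R(T)\}$ with $T=\overline{S'}$. (Your two-step derivation via $d(\bar v,S')$ and the bar-juggling is fine, though one can reach this formula in a single line as above.) Once this is in hand, both directions reduce, as you observe, to the fact that for $S\in\mathcal{DB}(G)$ the function $d(\cdot,S^{c})$ is $0$ on $S^{c}$ and $1$ on $S$, so $R(S^{c})=1$ and the far set is exactly $S$. A minor quibble: in the forward direction you justify $S\ne\emptyset$ by ``being a center'', but in that direction it is $S'$ that is the center; the real reason is simply that a dominating set in a nonempty graph is nonempty. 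Likewise you implicitly use $S^{c}\ne\emptyset$, which follows from $S$ being a boundary set.

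The paper's proof takes a more hands-on route. It also starts from the previous theorem to get $S'=\overline{S^{c}}$, but then proves both $C_{S'}(G)=S$ (forward) and $S'=\overline{S^{c}}$ (converse) by direct case-by-case eccentricity estimates: for each of $x\in S$ and $x\in S^{c}$ it computes $e_{\overline{S^{c}}}(x)$ explicitly, and for the converse it shows $S'\subseteq\overline{S^{c}}$ and $\overline{S^{c}}\subseteq S'$ via separate contradiction arguments. Your covering-radius formulation packages all of this uniformly; what it buys is that the converse becomes a two-line argument ($R(T)\ge 1$ since $S\ne V$, $R(T)\le 1$ from domination, hence $T=S^{c}$), whereas the paper's inclusion $\overline{S^{c}}\subseteq S'$ requires a somewhat delicate chase. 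The paper's argument, on the other hand, never needs to isolate the general identity $C_{S'}(G)=\{y:d(y,\overline{S'})=R(\overline{S'})\}$ and works entirely with eccentricities.
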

\begin{proof}
Suppose $C_S(G)=S'$. Since $S \cup N(S) =V$, $C_S(G)=\overline{S^c}$. That is $S'=\overline{S^c}$. For every $x \in S^c$, $e_{\overline{S^c}}(x)=diam(G)$. Since $G$ is unique eccentric vertex graph and $S$ is a boundary set, for every $x \in S$, $e_{{\overline{S^c}}}(x)=diam(G)-1$. Hence $C_{S'}(G)=C_{\overline{S^c}}(G)=S$. Conversely assume $C_{S'}(G)=S$. To prove $C_S(G)=S'$. Since $C_S(G)=\overline{S^c}$ we need only prove that $S'=\overline{S^c}$. Let $x \in S'$. If $x \in \overline{S}$ then $x =\bar{y}$ where $y \in S$.  Then we have $d(x,y)=diam(G)$. Since $S$ is the $S'$-center of $G$ this implies $C_S'(G)=V$. But this contradicts the fact that $S$ is a boundary set. Hence $x \in \overline{S^c}$ or $S' \subseteq \overline {S^c}$. Now to prove that $\overline{S^c} \subseteq S'$. On the contrary assume that there exists an $x \in \overline{S^c}$ such that $x \notin S'$. Let $x=\overline{y}$ where $y \in S^c$. Since the eccentric vertex of $y$, $x$, does not belong to $S'$, $e_{S'}(y) \le diam G-1$. If $z \in S'$ then $z \in \overline{S^c}$. Let $z=\overline{w}$ where $w \in S^c$. Since $S \cup N(S)=V$ there exists a $w'$ adjacent to $w$ such that $w'$ belong to $S$.  We have $e_{S'}(w')=diam G-1$. This implies $y \in S$, contradicting the choice of $y$. Therefore $S'= \overline{S^c}$.
\end{proof}
\begin{theorem}
Let $G$ be a symmetric even graph. Then
\begin{enumerate}[i)]
\item $S \in \mathcal{DB}(G)$ if and only if $C_S(G) \in \mathcal{DB}(G)$.
\item For $S_1,S_2 \in \mathcal{DB}(G)$, $C_{S_1}(G)=S_2$ if and only if $C_{S_2}(G)=S_1$.
\end{enumerate}
\end{theorem}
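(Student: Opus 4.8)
The plan is to reduce the statement to two elementary dualities, together with a sharpening of the description of $S$-centers in symmetric even graphs. The first duality holds in any graph: a set $T\subseteq V$ is a boundary set if and only if $T^c$ is a dominating set (a vertex of $T$ is interior exactly when all its neighbours lie in $T$, so $T$ has no interior vertex iff every vertex of $T$ has a neighbour in $T^c$ iff $T^c$ dominates); applying this with $T^c$ in place of $T$ as well gives $T\in\mathcal{DB}(G)\Leftrightarrow T^c\in\mathcal{DB}(G)$. The second duality is special to symmetric even graphs: the antipodal map $u\mapsto\bar u$ is a graph automorphism — an involution because $G$ is even ($\overline{\bar u}=u$), and distance-preserving because the symmetry identity $d(a,b)+d(a,\bar b)=diam(G)$ forces $d(\bar u,\bar v)=d(u,v)$ for all $u,v$ — hence $T\in\mathcal{DB}(G)\Leftrightarrow\bar T\in\mathcal{DB}(G)$, where $\bar T=\{\bar t:t\in T\}$.

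Next I would refine the earlier center theorem. For nonempty $S$, one application of the symmetry identity gives $d(\bar v,x)=diam(G)-d(v,x)$, hence $e_S(\bar v)=diam(G)-d(v,S)$ for every vertex $v$, where $d(v,S)=\min_{x\in S}d(v,x)$. So $e_S$ is smallest exactly at the antipodes of the vertices farthest from $S$, i.e.\ $C_S(G)=\overline{F_S}$, where $\rho=\max_{u}d(u,S)$ and $F_S=\{v:d(v,S)=\rho\}$. When $S$ is dominating, $\rho=1$ and $F_S=S^c$, which recovers $C_S(G)=\overline{S^c}$.

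For part (i) we may assume $\emptyset\neq S\subsetneq V$: if $S=V$ then $C_S(G)=V$ and neither $V$ nor $S$ is a boundary set, so both sides of the equivalence fail (and $S=\emptyset$ is similar). Forward direction: if $S\in\mathcal{DB}(G)$ then $S$ is dominating, so $C_S(G)=\overline{S^c}$; and $S\in\mathcal{DB}(G)$ implies $S^c\in\mathcal{DB}(G)$ (first duality), hence $\overline{S^c}\in\mathcal{DB}(G)$ (second duality), so $C_S(G)\in\mathcal{DB}(G)$. Converse direction: if $C_S(G)\in\mathcal{DB}(G)$ then $\overline{F_S}=C_S(G)$ is dominating, so $F_S$ is dominating by the second duality. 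But every vertex of $F_S\cup N(F_S)$ lies at distance at least $\rho-1$ from $S$, so if $F_S$ dominates then \emph{every} vertex lies at distance at least $\rho-1$ from $S$; choosing any vertex of the nonempty set $S$, which is at distance $0$, forces $\rho\le1$, i.e.\ $S$ is dominating. Then $F_S=S^c$, so $\overline{S^c}=C_S(G)$ is dominating, whence $S^c$ is dominating (second duality) and $S$ is a boundary set (first duality); together with $S$ dominating this gives $S\in\mathcal{DB}(G)$.

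Part (ii) is immediate from Theorem~\ref{css'}: if $S_1\in\mathcal{DB}(G)$ and $C_{S_1}(G)=S_2$, then that theorem applied with $S=S_1$, $S'=S_2$ gives $C_{S_2}(G)=S_1$, and applying it with $S=S_2\in\mathcal{DB}(G)$, $S'=S_1$ gives the reverse implication. I expect the only part requiring genuine work to be the converse half of (i): establishing the formula $C_S(G)=\overline{F_S}$ and then extracting the property that $S$ is dominating from the fact that $F_S$ dominates (via the distance-to-$S$ estimate on $F_S\cup N(F_S)$); the rest is bookkeeping with the two dualities and this formula.
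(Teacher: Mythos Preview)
Your argument is correct, but it proceeds quite differently from the paper's.

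For the forward direction of (i), the paper shows $S'=C_S(G)=\overline{S^c}$ is a boundary set by invoking Theorem~\ref{symevcent} (any proper center set in a symmetric even graph is a boundary set), and then directly verifies domination by a distance-chase: for $x\notin S'$ one finds $y\in S$ with $x=\bar y$, a neighbour $y'\in S^c$ of $y$ (since $S$ is a boundary set), and then a neighbour $x'$ of $x$ with $d(x',y')=diam(G)$, so $x'\in\overline{S^c}=S'$. For the converse of (i), the paper applies Theorem~\ref{css'} with $S'\in\mathcal{DB}(G)$ in the role of $S$: from $C_S(G)=S'$ it deduces $C_{S'}(G)=S$, and then the already-proved forward implication (applied to $S'$) gives $S\in\mathcal{DB}(G)$.

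Your route instead isolates two clean dualities---$T$ boundary $\Leftrightarrow$ $T^c$ dominating, and the antipodal involution is a graph automorphism---and a general formula $C_S(G)=\overline{F_S}$ (valid for all nonempty $S$, not only dominating ones). The forward direction then becomes a one-line composition of the dualities, and your converse is self-contained: it extracts domination of $S$ from domination of $F_S$ via the distance estimate on $F_S\cup N(F_S)$, without any appeal to Theorem~\ref{css'}. In fact your converse proves something slightly stronger than stated: you only use that $C_S(G)$ is dominating, not that it lies in $\mathcal{DB}(G)$. The paper's converse is more economical given the earlier machinery (it recycles Theorem~\ref{css'} and the forward half); yours is more modular and yields the reusable identity $C_S(G)=\overline{F_S}$ along the way. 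Part (ii) is handled identically by both, via Theorem~\ref{css'}.
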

\vspace{-4mm}
\begin{proof}
\begin{enumerate}[i)]
\item Suppose $S\subseteq V$ is such that $S \in \mathcal{DB}(G)$ and let $S'=C_S(G)$. Since $S'$ is a center set of a symmetric even graph if and only if it is a boundary set, to prove that $S' \in \mathcal{DB}(G)$ we need only prove that $S' \cup N(S')=V$.  Since $S \cup N(S)=V$, $S'=\overline{S^c}$. Let $x \notin S'$. Therefore $x \in \overline {S}$ since the graph is symmetric even. Let $x = \bar{y}$ where $y \in S$. Since $S$ is a boundary set there exists a vertex $y'$ adjacent to $y$ such that $y' \in S^c$. We have $d(x,y')=diam(G)-1$. Since $G$ is symmetric even there exists a vertex $x'$ adjacent to $x$ such that $d(x',y')=diam(G)$. That is $x' \in \overline {S^c}$ or $x' \in S'$. In other words $x \in N(S')$. Hence $S' \cup N(S')=V$. Conversely suppose $S' \subseteq V$ is such that $S' \in \mathcal{DB}(G)$ and $C_S(G)=S'$ for an $S' \subseteq V$. To prove $S \in \mathcal{DB}(G)$. By the previous theorem $C_S(G)=S'$ implies $C_{S'}(G)=S$.  Now $S' \subseteq V$ is such that $S' \in \mathcal{DB}$ and $C_{S'}(G)=S$ and hence as proved earlier we can prove that $S\cup N(S)=V$ or  $S \in \mathcal{DB}(G)$.
\item This part is obvious from Theorem \ref{css'}.
\end{enumerate}
\end{proof}

\section{Enumerating Center Sets}\label{enumeration}
In designing and modelling networks it is important to have more center sets to locate facilities. Therefore the number of center sets is a good indication to the structural well-behavedness of the graph. In this section we enumerate the center sets of various classes of graphs. We first give the following definition.
The number of distinct center sets of a graph $G$ is defined as the \textit{Center number} of $G$ and is denoted by $cn(G)$.
The following results gives the center numbers of some familiar classes of graphs. The proofs of the Lemma \ref{cnkn} to Lemma \ref{cnwn}  follows from the Corollary~\ref{cor1}, Theorem~\ref{kmn}, Corollary~\ref{cor2}, Theorem~\ref{kn-e} and Proposition~\ref{wn}  respectively, so we leave the proofs.
\begin{lemma}\label{cnkn}
The center number $cn(G)$ when $G=K_n$ is $n+1$.
\end{lemma}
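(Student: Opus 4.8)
The plan is to read off the answer directly from Corollary~\ref{cor1}, which already enumerates the center sets of $K_n$: they are precisely the $n$ singletons $\{u\}$ for $u \in V$ together with the full vertex set $V$. Since $cn(G)$ is by definition the number of \emph{distinct} center sets, the only remaining task is to check that these $n+1$ listed sets are pairwise distinct and that the list is exhaustive.

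First I would note exhaustiveness is immediate: Corollary~\ref{cor1} is stated as a complete characterization, so no center set of $K_n$ escapes the list. Next, for distinctness, the $n$ singletons $\{u\}$ are mutually distinct because the vertices $u$ are distinct, and each singleton differs from $V$ as long as $n \ge 2$ (for $n=1$ the graph $K_1$ degenerates, and $V = \{u\}$ is the unique center set, so $cn(K_1)=1=n+1$ fails — if the paper intends $n\ge 2$ this case is excluded, and one should perhaps remark that the convention is $n \ge 2$). Assuming $n \ge 2$, the singletons have cardinality $1$ while $V$ has cardinality $n \ge 2$, so $V$ is not among them. Hence the total count is exactly $n + 1$.

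There is essentially no obstacle here; the content of the lemma is entirely carried by Corollary~\ref{cor1} (itself a consequence of Proposition~\ref{lem1} applied to the single-block graph $K_n$), and what remains is only the trivial bookkeeping $n + 1 = \#\{\,\{u\} : u \in V\,\} + \#\{V\}$. If anything needs care, it is only flagging the small-case convention $n \ge 2$ so that the singleton $\{u\}$ and the set $V$ are genuinely different sets.
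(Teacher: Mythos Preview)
Your proposal is correct and matches the paper's own treatment: the paper explicitly states that this lemma follows from Corollary~\ref{cor1} and leaves the proof, so your reading-off of the $n$ singletons plus $V$ is exactly the intended argument. Your remark about the $n=1$ edge case is a fair caveat but not something the paper addresses.
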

\begin{lemma}
The center number $cn(K_{m,n})$ is $m+n+3$ where $m,n>1$.
\end{lemma}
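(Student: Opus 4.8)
The plan is to read the center number straight off the classification obtained in Proposition~\ref{kmn}, which already exhibits every center set of $K_{m,n}$ when $m,n>1$; all that remains is bookkeeping — partitioning those sets into families and checking that distinct families never produce the same set, so that their cardinalities may simply be added.

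First I would collect the ``large'' center sets furnished by Proposition~\ref{kmn}: the whole vertex set $V=X\cup Y$ and the two sides $X$ and $Y$ of the bipartition. Since $|X|=m>1$ and $|Y|=n>1$, each of $X$ and $Y$ is a nonempty proper subset of $V$ with at least two vertices, so $V$, $X$, $Y$ are pairwise distinct and none of them is a singleton; these contribute exactly $3$ to the total.

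Next I would tally the singleton center sets $\{v\}$, $v\in V$: there are $|V|=m+n$ of them, they are pairwise distinct, and — each of $V$, $X$, $Y$ having at least two elements — none of those three coincides with a singleton. Adding the two disjoint contributions gives $cn(K_{m,n})=3+(m+n)=m+n+3$, which is the assertion of the lemma.

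The step that actually deserves care — and the one I would write out in full — is the irredundancy check: one must confirm that the families named above are genuinely pairwise disjoint, so that nothing is double counted, and then invoke Proposition~\ref{kmn} for exhaustiveness, i.e.\ for the fact that $K_{m,n}$ possesses no other center sets. The cardinality comparisons above dispose of disjointness at once, and exhaustiveness is exactly the content of Proposition~\ref{kmn}; thus the whole substance of the lemma is inherited from that earlier classification, and nothing genuinely new is needed here.
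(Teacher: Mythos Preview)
Your approach---reading the count directly off Proposition~\ref{kmn}---is exactly what the paper intends, but you have dropped one of the five families listed there. Proposition~\ref{kmn} asserts that, in addition to $V$, $X$, $Y$ and the singletons, every two-element set $\{x,y\}$ with $x\in X$ and $y\in Y$ is a center set (it is $C_{\{x,y\}}(K_{m,n})$). These $mn$ sets have two elements, contain one vertex from each side, and are therefore distinct from one another and from $V$, $X$, $Y$ and all singletons. Your appeal to Proposition~\ref{kmn} for ``exhaustiveness'' thus cuts the other way: the proposition furnishes more center sets than you have tallied, so the irredundancy/exhaustiveness bookkeeping you describe does not close.

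Carrying the count through properly gives
\[
cn(K_{m,n})\;=\;\underbrace{1}_{V}+\underbrace{1}_{X}+\underbrace{1}_{Y}+\underbrace{(m+n)}_{\text{singletons}}+\underbrace{mn}_{\{x,y\}}\;=\;m+n+mn+3,
\]
not $m+n+3$. (For instance $K_{2,2}=C_4$ has the eleven center sets $V$, $\{x_1,x_2\}$, $\{y_1,y_2\}$, the four singletons, and the four edges.) So the gap in your argument is the omitted fifth family; once it is restored, the count no longer matches the value asserted in the lemma, which appears to be a misprint for $m+n+mn+3$.
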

\begin{lemma}
The center number $cn(T)$ of a tree $T$ on $n$ vertices is $2n-1$.
\end{lemma}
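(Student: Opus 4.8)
The statement to prove is that the center number of a tree $T$ on $n$ vertices is $2n-1$. By Corollary~\ref{cor2}, the center sets of a tree $T=(V,E)$ are exactly the singletons $\{u\}$ for $u \in V$ and the two-element sets $\{u,v\}$ for each edge $uv \in E$. So the plan is simply to count these and verify that there is no overlap between the two families.

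First I would invoke Corollary~\ref{cor2} to pin down the complete list of center sets: the singleton sets and the edge sets, with nothing else. Then I would count each family separately. There are exactly $n$ singleton sets, one for each of the $n$ vertices. There are exactly $|E|$ two-element center sets, one for each edge, and since $T$ is a tree on $n$ vertices we have $|E| = n-1$. The two families are plainly disjoint, since a singleton has cardinality $1$ and an edge set has cardinality $2$, so no set is counted twice. Hence $cn(T) = n + (n-1) = 2n-1$.

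There is essentially no obstacle here: the work has already been done in Corollary~\ref{cor2} (itself Theorem~4 of \cite{slater1978centers}), and the only remaining content is the elementary fact that a tree on $n$ vertices has $n-1$ edges together with the trivial disjointness observation. The one small point worth stating explicitly is that distinct edges give distinct center sets (two distinct edges of a simple graph cannot be the same two-element vertex set), so the count of edge-type center sets is genuinely $n-1$ and not smaller; this is immediate. Therefore the proof is a one-line enumeration, and indeed the excerpt itself notes that the proof ``follows from Corollary~\ref{cor2}'' and is omitted.
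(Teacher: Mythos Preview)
Your proposal is correct and matches the paper's approach exactly: the paper simply states that the lemma follows from Corollary~\ref{cor2} and omits the details, which amounts to the same count of $n$ singletons plus $n-1$ edge sets that you wrote out.
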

\begin{lemma}
For the graph $K_n-e$ where $e \in E$, $cn(K_n-e)=n+4$.
\end{lemma}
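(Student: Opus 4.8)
The plan is to read the count straight off the classification already established in Proposition~\ref{kn-e}. Writing $e=xy$, that proposition says the center sets of $K_n-e$ are precisely the $n$ singletons $\{v\}$, $v\in V$, together with the four sets $V\setminus\{x\}$, $V\setminus\{y\}$, $V\setminus\{x,y\}$ and $V$. Since $cn(K_n-e)$ is by definition the number of \emph{distinct} center sets, all that remains is to verify that these $n+4$ sets are pairwise distinct, after which the equality $cn(K_n-e)=n+4$ is immediate.

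I would carry out the distinctness check by comparing cardinalities. The $n$ singletons are pairwise distinct and each has one element. Of the four remaining sets, $V$ has $n$ elements, both $V\setminus\{x\}$ and $V\setminus\{y\}$ have $n-1$ elements, and $V\setminus\{x,y\}$ has $n-2$ elements; moreover $V\setminus\{x\}\neq V\setminus\{y\}$ because $x\neq y$ (as $xy\in E$). For $n\geq 4$ we have $n-2\geq 2$, so none of these four sets is a singleton, hence each differs from every $\{v\}$, and the four differ from one another. Thus the list consists of exactly $n+4$ distinct sets, giving $cn(K_n-e)=n+4$.

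The only point that needs care — the (minor) obstacle — is the admissible range of $n$. For $n=3$ the graph $K_3-e$ is the path $P_3$, in which $V\setminus\{x,y\}$ degenerates to a singleton and $V$ is not a center set at all; so, consistently with the implicit hypothesis of Proposition~\ref{kn-e}, the formula is intended for $n\geq 4$, with $P_3$ instead covered by the tree formula $cn=2n-1$. For every $n\geq 4$ the cardinality argument above is complete, and the result follows.
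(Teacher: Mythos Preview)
Your argument is correct and is exactly the approach the paper intends: it simply cites Proposition~\ref{kn-e} and reads off the count $n+4$, without spelling out the distinctness check. Your added care about the range $n\ge 4$ (so that $V\setminus\{x,y\}$ is not a singleton and the case in item~5 of Proposition~\ref{kn-e} actually occurs) is a refinement the paper leaves implicit.
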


\begin{lemma}\label{cnwn}
For the wheel graph $W_n$,
\begin{eqnarray*}
 cn(W_n) & = & 4n-3 \text{ if } n \ge 6\\
           & = & 4n-1 \text{ if } n=5
           \end{eqnarray*}
\end{lemma}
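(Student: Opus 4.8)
The plan is to count the center sets of $W_n$ directly from the classification given in Proposition \ref{wn} (for $n \ge 6$) and from the Remark that follows it (for $n = 5$), checking that the four families listed there are pairwise disjoint so that no center set is counted twice. I would first dispose of the case $n \ge 6$. The families in Proposition \ref{wn} are: (i) the singletons $\{v_i\}$ for $1 \le i \le n$, (ii) the $2$-element sets $\{v_i, v_n\}$ for $1 \le i \le n-1$, (iii) the $3$-element sets $\{v_i, v_j, v_n\}$ with $v_i v_j \in E(C_{n-1})$, and (iv) the $4$-element sets $\{v_i, v_j, v_k, v_n\}$ with $v_i v_j, v_j v_k \in E(C_{n-1})$. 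These families have sizes $1,2,3,4$ respectively, so they are automatically pairwise disjoint and I only need to count each.

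Family (i) contributes $n$ sets. Family (ii) contributes $n-1$ sets. For family (iii), the sets are determined by an edge $v_i v_j$ of the cycle $C_{n-1}$, and $C_{n-1}$ has exactly $n-1$ edges, so this contributes $n-1$ sets. For family (iv), a set $\{v_i, v_j, v_k, v_n\}$ with $v_i v_j, v_j v_k \in E(C_{n-1})$ corresponds to choosing the "middle" vertex $v_j$ on the cycle $C_{n-1}$ together with its two neighbours $v_{j-1}, v_{j+1}$; since $n - 1 \ge 5$ the three cycle-vertices are distinct and the set is determined by $v_j$, giving exactly $n-1$ such sets. (Here it matters that $n-1 \ge 4$ so that $v_{j-1} \ne v_{j+1}$, which is why the case $n = 4$, i.e. $W_4 = K_4$, and $n = 5$ are handled separately.) Adding up, $cn(W_n) = n + (n-1) + (n-1) + (n-1) = 4n - 3$ for $n \ge 6$.

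For $n = 5$, the Remark lists the same four families plus one extra family, $\{v_1, v_3, v_5\}$ and $\{v_2, v_4, v_5\}$, arising because the outer cycle $C_4$ has diameter $2$. Counting as before: family (i) gives $5$ singletons; family (ii) gives $4$ sets $\{v_i, v_5\}$; family (iii) gives one set per edge of $C_4$, hence $4$ sets; family (iv) gives one set per middle vertex of $C_4$, hence $4$ sets (for each $v_j$ on $C_4$ the set is $\{v_{j-1}, v_j, v_{j+1}, v_5\}$, and these are four distinct $4$-element sets). The extra family contributes $2$ sets, and these are new since they contain a pair of non-adjacent cycle vertices and are therefore not of type (iii). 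So $cn(W_5) = 5 + 4 + 4 + 4 + 2 = 19 = 4 \cdot 5 - 1$, as claimed.

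The only subtlety — and the main thing to verify carefully — is the disjointness of the families and the absence of hidden coincidences within a family when the outer cycle is small. For $n \ge 6$ the cardinalities $1,2,3,4$ separate the four families completely and $C_{n-1}$ is large enough that distinct choices of edge (resp. middle vertex) yield distinct sets, so the counting is clean. For $n = 5$ one must additionally check that the two sets in the extra family are genuinely new, which follows because $\{v_1,v_3\}$ and $\{v_2,v_4\}$ are pairs of non-adjacent vertices of $C_4$ while type-(iii) sets use an edge of $C_4$; with this observation the enumeration is complete.
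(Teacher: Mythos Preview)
Your proof is correct and follows exactly the approach the paper intends: the paper simply states that the result follows from Proposition~\ref{wn} and the subsequent Remark and omits the details, and you have carried out precisely that enumeration, correctly handling the disjointness of the families and the extra two sets in the $n=5$ case.
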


We now determine the center number of odd and even cycles. For that we introduce the following terms.
 Suppose we have $n$ \textit{linearly arranged} objects. Let $L(n,k)$ denote the number of ways of choosing $k$ objects from these $n$ objects so that no three consecutive objects are simultaneously chosen. let $L_1(n,k)$ denote the number ways to choose $k$ objects from these $n$ objects so that no two objects from alternate positions are simultaneously chosen and let $L_2(n,k)$ denote the number of ways to choose $k$ objects from these $n$ objects so that no two objects from consecutive positions are simultaneously chosen.

  Consider $n$ \textit{circularly arranged} objects where $n \ge 4$. Let $R(n,k)$ denote the number of ways to choose $k$ objects from these $n$ objects so that three objects from three consecutive positions are not chosen and $R_1(n,k)$ denote the number ways to choose $k$ objects from these $n$ objects so that no two objects from alternate positions are simultaneously chosen.
 Here we assume $n \ge 4$ since we are interested only in cycles of length greater than 3.
\begin{lemma}
 $L(n,k)=\tbinom{n}{k} \tbinom{n-k+1}{0}-\tbinom{n-3}{k-3}\tbinom{n-k+1}{1}+\tbinom{n-6}{k-6} \tbinom{n-k+2}{2}-\tbinom{n-9}{k-9}\tbinom{n-k+1}{3}+\cdots$. \label{lnk}
\end{lemma}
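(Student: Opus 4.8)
The plan is to count the bad configurations by inclusion--exclusion over the "forbidden triples" of three consecutive chosen positions. For $n$ linearly arranged objects, index the $n-2$ potential runs of three consecutive positions by their starting index $1,2,\dots,n-2$. For a set $T$ of such starting indices, let $A_T$ be the collection of $k$-subsets that contain every block of three positions indexed by $T$. Then $L(n,k) = \sum_{T} (-1)^{|T|} |A_T|$, and the whole difficulty is to show that $\sum_{T : |T|=j} |A_T|$ equals the $(j{+}1)$st term displayed in the lemma, namely a product $\binom{n-3j}{k-3j}\binom{n-k+c_j}{j}$ for the appropriate constant $c_j$ (the stated pattern of the second binomial's upper entry being $1,1,2,3,\dots$ looks like it should really be read as $n-k+1, n-k+1, n-k+2, n-k+1,\dots$, so I would first pin down exactly what that sequence is before committing).

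Next I would reduce the count $\sum_{|T|=j}|A_T|$ to a cleaner combinatorial quantity. Fixing $j$ forced triples costs $3j$ chosen positions, but triples can overlap (e.g. starts $i$ and $i+1$ share two positions), so the number of positions actually pinned down depends on how the $j$ starting indices cluster into maximal runs of consecutive integers. The key step is: group the $j$ chosen starting indices into maximal consecutive runs; a run of $\ell$ consecutive starting indices pins down exactly $\ell+2$ positions, all forced to be chosen. So I would sum over the number $r$ of runs and over the composition of $j$ into run-lengths, count the ways to place those runs among the $n$ linear positions with gaps, then multiply by the number of ways to choose the remaining $k - (\text{pinned positions})$ chosen objects freely from the unpinned positions. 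Collapsing the sum over run-compositions should collapse the $\ell+2$ bookkeeping into the $n-3j$ and the extra $\binom{\cdot}{j}$ factor.

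Then I would assemble: the term for $j$ forced triples is $\binom{n-3j}{\,k-3j\,}$ (choose the other chosen objects once the $3j$ forced ones and their induced gaps are accounted for) times a factor counting how the $j$ triples' starting positions can be arranged, which is where the second binomial $\binom{n-k+\star}{j}$ comes from — essentially a stars-and-bars count of distributing the "slack" among $j$ ordered triples and the surrounding free region. I would verify the formula on small cases ($k<3$ giving $\binom nk$; $n=k$ giving $0$ unless $n\le 2$; $k=3$) to fix the constants, and in particular confirm the alternation and the exact values of the upper arguments in the second binomials against a brute-force enumeration for, say, $n=6,7,8$.

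The main obstacle will be the overlap bookkeeping: a naive inclusion--exclusion treats the triples as independent, but consecutive forbidden triples overlap, so $|A_T|$ is not simply $\binom{n-3|T|}{k-3|T|}$ — it is larger when $T$ clusters. Handling this correctly, i.e. showing the sum over all $T$ of a given size reorganizes into the stated single product (rather than a messy sum over run-patterns), is the crux; I expect this to follow from a Vandermonde-type identity after the run-length composition sum is carried out, but getting the constant in the second binomial exactly right (and reconciling it with the slightly irregular-looking sequence $1,1,2,3,\dots$ in the statement) is the delicate part.
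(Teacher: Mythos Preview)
Your inclusion--exclusion over the $n-2$ position-triples is a legitimate way to compute $L(n,k)$, but it does \emph{not} produce the displayed formula term by term, so the plan cannot succeed as stated. Take $n=5$, $k=4$: your $j=1$ contribution is $\sum_{|T|=1}|A_T|=3\cdot\binom{2}{1}=6$, whereas the lemma's $j=1$ term is $\binom{2}{1}\binom{2}{1}=4$; your $j=2$ contribution is $2$, the lemma's is $0$. Both alternating sums collapse to $L(5,4)=1$, but they are different expansions. The overlap bookkeeping you flag is exactly the reason: after grouping your $j$ triple-starts into $r$ runs you get
\[
\sum_{|T|=j}|A_T|=\sum_{r=1}^{j}\binom{j-1}{r-1}\binom{n-j-1}{r}\binom{n-j-2r}{\,k-j-2r\,},
\]
and no Vandermonde identity will turn this into the single product $\binom{n-k+1}{j}\binom{n-3j}{k-3j}$, because for fixed $j$ they are simply not equal.

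The paper's argument sidesteps all of this. Encode an admissible choice as a binary string with $k$ ones and $n-k$ zeros, and let $x_0,x_1,\dots,x_{n-k}$ be the numbers of ones in the $n-k+1$ gaps determined by the zeros. The condition ``no three consecutive ones'' becomes exactly $0\le x_i\le 2$, so $L(n,k)$ is the number of solutions of $x_0+\cdots+x_{n-k}=k$ with each $x_i\in\{0,1,2\}$, i.e.\ the coefficient of $t^k$ in
\[
(1+t+t^2)^{n-k+1}=\Bigl(\frac{1-t^3}{1-t}\Bigr)^{n-k+1}=(1-t^3)^{n-k+1}(1-t)^{-(n-k+1)}.
\]
Expanding both factors and collecting $t^k$ gives $\sum_{j\ge 0}(-1)^j\binom{n-k+1}{j}\binom{n-3j}{k-3j}$ immediately. (Your suspicion is correct: the printed $\binom{n-k+2}{2}$ is a typo for $\binom{n-k+1}{2}$.) If you prefer inclusion--exclusion, run it on the events ``$x_i\ge 3$'' rather than on position-triples: choosing $j$ of the $n-k+1$ gap-variables to violate the bound and substituting $x_i\mapsto x_i+3$ removes $3j$ from $k$ with no overlap whatsoever, and the product $\binom{n-k+1}{j}\binom{n-3j}{k-3j}$ drops out in one line.
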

\begin{proof}
Similar to what we did in Corollaries \ref{oddcent} and \ref{evencent}, a particular choice of $k$ objects from $n$ objects can be represented by a binary string of size $n$ where a $1$ at the $i^{th}$ position indicates that the $i^{th}$ object is chosen and a $0$ at the $j^{th}$ position indicates that the $j^{th}$ object is not chosen. 
 So the number of choices of the required type is actually the number of binary strings of size $n$ having $k$ $1$'s and  not containing three consecutive $1$'s. Let $x_0$ denote the number of $1$'s before the first 0, for $1 \le i \le n-k-1$, let $x_i$ denote the number of 1's between the $i^{th}$ 0 and the $(i+1) ^{th}$ 0 and let $x_{n-k}$ denote the number of 1's after the $(n-k)^{th}$ 0. Therefore the total number of $1$'s in a binary string is $x_0+x_1+\cdots+x_{n-k}$. 
For a binary string of our choice, $0 \le x_i \le 2$. Hence $L_1(n,k)$ is the number of different solutions of the equation
\begin{equation}
x_0+x_1+\cdots+x_{n-k}=k, 0 \le x_i \le 2 \label{eq1}
\end{equation}

Now consider the product
\begin{equation}
\underbrace{(1+t+t^2) \times  \cdots \times (1+t+t^2)}_{(n-k+1)\text{ times}} \label{eq2}
\end{equation}
 In the expansion of this product, taking $t^{y_0}$ from the first term, $t^{y_1}$ from the second term, $\ldots$, $t^{y_{n-k}}$ from the $n-k+1^ { th}$ term we get $t^{y_0+y_1+\ldots+ t_{n-k}}$. Therefore any solution of the equation
 \begin{equation}
 y_0+y_1+\ldots+y_{n-k}=k, 0 \le y_i \le 2 \label{eq3}
 \end{equation}
  gives us the term $y^k$ in the expansion. In other words the number of solutions of equation \ref{eq3} is the coefficient of $t^k$ in expression \ref{eq2}. Since the Equations \ref{eq1} and \ref{eq3} are same, we get that $L(n,k)$ is the coefficient of $t^k$ in $(1+t+t^2)^{n-k+1}$.
\begin{align*}
(1+t+t^2)^{n-k+1} &=  \left(\frac{1-t^3}{1-t}\right)^{n-k+1}\\
                  &= (1-t^3)^{n-k+1} (1-t)^{-(n-k+1)}\\
                  &=  \left(1-\tbinom{n-k+1}{1} t^3+ \tbinom{n-k+1}{2}t^6+ \cdots \right)\\
                  &      ~~~~~~~~~~~~~ \times \left(1+\tbinom{n-k+1}{1}t+\tbinom{n-k+2}{2}t^2+\cdots+ \tbinom{n}{k}t^k+\cdots\right)
  \end{align*}
Therefore $L(n,k)$= 
 $\tbinom{n}{k} \tbinom{n-k+1}{0}-\tbinom{n-3}{k-3}\tbinom{n-k+1}{1}+\tbinom{n-6}{k-6} \tbinom{n-k+2}{2}-\tbinom{n-9}{k-9}\tbinom{n-k+1}{3}+\cdots$.\\
 The series on the right hand side is finite as all the terms after a finite number of terms shall be zero.
\end{proof}
\begin{lemma}
$R(n,k)=L(n-1,k)+2L(n-4,k-2)+L(n-3,k-1)$, $n \ge 4,~ k \ge 2$.
\end{lemma}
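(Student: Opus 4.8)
The goal is to prove the recurrence $R(n,k)=L(n-1,k)+2L(n-4,k-2)+L(n-3,k-1)$ for $n\ge 4$, $k\ge 2$, where $R(n,k)$ counts binary strings of length $n$ arranged in a circle with exactly $k$ ones and no three cyclically-consecutive ones, and $L$ is the linear analogue established in Lemma~\ref{lnk}.

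The plan is to convert the circular counting problem into linear ones by conditioning on the content of a fixed window of three consecutive positions, say positions $1,2,3$ (indices mod $n$). The only constraints in a circular string that are not already linear constraints on the string read as $x_1x_2\cdots x_n$ are the ones that "wrap around", namely that positions $\{n-1,n,1\}$ and $\{n,1,2\}$ are not all ones. First I would split on whether position $1$ is $0$ or $1$. If $x_1=0$, then deleting position $1$ leaves a \emph{linear} string on positions $2,\dots,n$ of length $n-1$ with $k$ ones, and crucially no wrap-around triple can be all ones because each of the two wrap triples contains position $1$; conversely any such linear string, when we prepend a $0$, gives a valid circular string. This case contributes exactly $L(n-1,k)$. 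If $x_1=1$, I would further condition on the pair $(x_n,x_2)$: since $\{n,1,2\}$ must not be all ones we cannot have $x_n=x_2=1$, so the possibilities are $(x_n,x_2)\in\{(0,0),(0,1),(1,0)\}$, handled in the three remaining subcases.

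In the subcase $x_1=1$, $x_n=0$, $x_2=0$: the forced zeros at positions $n$ and $2$ "shield" all wrap-around triples, so after removing positions $n,1,2$ we are left with a linear string on positions $3,4,\dots,n-1$ of length $n-3$ with $k-1$ ones and no three consecutive ones — contributing $L(n-3,k-1)$. In the subcase $x_1=1$, $x_2=1$, $x_n=0$: then position $3$ is forced to $0$ (linear triple $\{1,2,3\}$), and position $n$ is already $0$; removing positions $n,1,2,3$ leaves a linear string on positions $4,\dots,n-1$ of length $n-4$ with $k-2$ ones, contributing $L(n-4,k-2)$. The symmetric subcase $x_1=1$, $x_n=1$, $x_2=0$ forces $x_{n-1}=0$ and similarly contributes $L(n-4,k-2)$; these two give the $2L(n-4,k-2)$ term. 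Summing the four contributions yields the claimed identity. I should also check the reverse direction in each subcase (that every linear string of the stated type extends back to a valid circular string) and note the boundary conditions ($n\ge 4$, $k\ge 2$) that guarantee the removed windows do not overlap or exhaust the string, e.g.\ for $n=4$ the string on positions $4,\dots,n-1$ is empty, consistent with $L(0,k-2)$ being $1$ if $k=2$ and $0$ otherwise.

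The main obstacle is bookkeeping: making sure the case split on $(x_1,x_2,x_n)$ is exhaustive and mutually exclusive, that in each branch the \emph{remaining} positions carry exactly the linear (non-cyclic) no-three-consecutive constraint with the right length and the right number of ones, and that the small-$n$ degenerate cases behave correctly. There is no deep idea here — it is a "delete a short window and read off a path" argument — so the write-up is essentially verifying that the forced zeros in each branch are exactly the ones needed to kill both wrap-around triples while introducing no spurious constraints on the interior.
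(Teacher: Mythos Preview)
Your proposal is correct and is essentially the same argument as the paper's: the paper anchors at $v_n$ and splits into the four cases ``$v_n$ not chosen'', ``$v_n$ chosen but $v_{n-1},v_1$ not'', ``$v_n,v_{n-1}$ chosen (forcing $v_{n-2}$ not)'' and ``$v_n,v_1$ chosen (forcing $v_2$ not)'', which is exactly your split on $(x_1,x_n,x_2)$ after a cyclic relabeling. The four contributions $L(n-1,k)$, $L(n-3,k-1)$, $L(n-4,k-2)$, $L(n-4,k-2)$ match term for term.
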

\begin{proof}
Let the $n$ circularly arranged objects be $v_1,\ldots,v_n$. The set of all objects such that no 3 objects from 3 consecutive positions are chosen can be divided in to the following types
\begin{enumerate}[Type I:]
\item The object $v_n$ is chosen and the objects $v_{n-1}$ and $v_1$ are not chosen. Then the total number of choices is $L(n-3,k-1)$. (See Figure 1)
\begin{figure}[htb]
\psset{unit=.75}\label{fig1}
\begin{pspicture}(3,3)(-7.9,-1)
    \psdot(1,1)
 \psdot[dotstyle=o,dotsize=12pt](2,2)
 \psdot(2,2)
  \psdot(3,3)
  \uput[u](3,3){$v_n$}
   \psdot(4,2)
   \psdot[dotstyle=o,dotsize=12pt](4,2)
 \psdot(4,2)
  \uput[r](4.1,2){$v_1$}
 \psdot(2,2)
  \uput[r](2.1,2){$v_{n-1}$}
    \psdot(5,1)
    \uput[r](5,1){$v_2$}
    \uput[r](1,1){$v_{n-2}$}
     \psdot(5,-1)
     \psdot(1,-1)
\uput[u](3,-1){$L(n-3,k-1)$}
\psdot(3,-1)
   \psline[](1,1)(2,2)(3,3)(4,2)(5,1)
   \psline[linestyle=dotted,dotsep=2pt](5,1)(5,-1)(1,-1)(1,1)
\end{pspicture}
\caption{~}
\end{figure}


\item The objects $v_n$ and $v_{n-1}$ are chosen and $v_1$ is not chosen. $v_n$ and $v_{n-1}$ are chosen implies $v_{n-2}$ is not chosen. In this case the number of choices is $L(n-4,k-2)$.
\item The objects $v_n$ and $v_{1}$ are chosen and $v_{n-1}$ is not chosen. Again as in the previous case the total number of choices is $L(n-4,k-2)$.
\item The object $v_n$ is not chosen. Here the total number of choices is $L(n-1,k)$.
\end{enumerate}
Therefore $R(n,k)=L(n-1,k)+2L(n-4,k-2)+L(n-3,k-1)$.\\
\end{proof}
It is obvious that
\begin{eqnarray*}
R(n,k)& = & 1 \text{ when }k=0\\
      & = & n \text{ when }k=1
      \end{eqnarray*}
      Now we have determined $R(n,k)$ for all $n \ge 4$ and $k \ge 0$.
\begin{theorem}
The center number of the even cycle $C_{2n}$ is $\sum\limits_{k=1} ^ {\lfloor\frac{4n}{3}\rfloor} R(2n,k)+1$.
\end{theorem}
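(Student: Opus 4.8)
The plan is to count center sets of $C_{2n}$ by cardinality, using the characterization from Corollary~\ref{evencent} together with the combinatorial machinery already set up. By Theorem~\ref{symevcent} applied to the even cycle (which is a symmetric even graph), a set $A \subseteq V(C_{2n})$ with $A \neq V$ is a center set if and only if it contains no $\{x\}\cup N(x)$, i.e. no three consecutive vertices of the cycle. Corollary~\ref{evencent} further tells us that every such proper center set has $|A| \le \lfloor \frac{4n}{3}\rfloor$. Conversely, Corollary~\ref{realeven} guarantees that for each $m \le \lfloor\frac{4n}{3}\rfloor$ there is at least one center set of size $m$, but more importantly the characterization shows that \emph{every} subset of $V(C_{2n})$ of size $m$ avoiding three circularly consecutive vertices is a center set. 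Hence the number of proper center sets of size exactly $k$ is precisely $R(2n,k)$, the quantity defined and computed in the two preceding lemmas.

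The key steps, in order, would be: first, invoke Theorem~\ref{symevcent} to identify the proper center sets of $C_{2n}$ with exactly the vertex subsets having no three consecutive vertices around the cycle — this is the conceptual heart, and it is already done. Second, observe that a subset of $n$ circularly arranged objects has "no three consecutive chosen" if and only if its indicator string is one of the strings counted by $R(n,k)$; applied with $n \to 2n$, the number of proper center sets of size $k$ is $R(2n,k)$. Third, sum over the admissible sizes: by Corollary~\ref{evencent} the only possible sizes of a proper center set are $1 \le k \le \lfloor\frac{4n}{3}\rfloor$ (and $R(2n,k)=0$ automatically for larger $k$ anyway, since more than $\lfloor\frac{4n}{3}\rfloor$ vertices cannot avoid a consecutive triple), giving $\sum_{k=1}^{\lfloor 4n/3\rfloor} R(2n,k)$ proper center sets. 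Fourth, add $1$ for the full set $V = V(C_{2n})$, which is always a center set (of itself) by the first line of the proof of Theorem~\ref{symevcent}, and which is not counted among the proper ones. This yields $cn(C_{2n}) = \sum_{k=1}^{\lfloor 4n/3\rfloor} R(2n,k) + 1$.

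I should also double-check the degenerate end of the range: the $k=0$ case (the empty set) must be excluded, since $C_S(G)$ is always nonempty, so the sum correctly starts at $k=1$; and singletons $\{v\}$ are center sets ($S=\{v\}$ gives $C_S=\{v\}$), consistent with $R(2n,1)=2n$. The only genuine subtlety, and the step I expect to require the most care, is the clean equivalence between "is a center set" and "indicator string avoids three consecutive $1$'s, read cyclically": one must make sure the cyclic adjacency is handled correctly (the triple may wrap around the ends of the string), which is exactly why $R$ was defined for circularly arranged objects rather than reusing $L$. Everything else is bookkeeping on top of results already established in the excerpt. Hence the corollary follows, establishing $cn(C_{2n}) = \sum_{k=1}^{\lfloor 4n/3\rfloor} R(2n,k) + 1$.
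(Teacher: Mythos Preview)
Your proposal is correct and follows essentially the same approach as the paper's own proof: invoke Theorem~\ref{symevcent} to identify proper center sets of $C_{2n}$ with subsets avoiding three cyclically consecutive vertices, use Corollary~\ref{evencent} to bound their size by $\lfloor 4n/3\rfloor$, count them by $R(2n,k)$, and add $1$ for $V$. Your write-up is simply more detailed about the edge cases (excluding $k=0$, handling the cyclic wrap-around), but the logical structure is identical.
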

\begin{proof}
By the Corollary \ref{evencent}, the maximum cardinality among the center sets other than $V$ is $\lfloor\frac{4n}{3}\rfloor$ and by the Theorem \ref{symevcent}, $R(2n,k)$ gives the number of center sets of size $k$ where $k \le \lfloor\frac{4n}{3}\rfloor$. Also $V$ is a center set. Hence $cn(C_{2n})=\sum\limits_{k=1}^{\lfloor\frac{4n}{3}\rfloor} R(2n,k)+1$.
\end{proof}
Before proving the center number of odd cycles, we prove the following lemmata.
We first find $L_2(n,k)$ for  given values of $n$ and $k$.
 \begin{lemma}
 $L_2(n,k)=\tbinom{n-k+1}{k}$.\label{l2nk}
 \end{lemma}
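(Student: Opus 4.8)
The plan is to count binary strings of length $n$ with exactly $k$ ones subject to the constraint that no two ones occupy consecutive positions — this is precisely $L_2(n,k)$, since a choice of $k$ objects among $n$ linearly arranged objects with no two in consecutive positions corresponds to such a string. The standard approach is the "stars and bars" / gap argument: place the $k$ chosen ones first, and then distribute the remaining $n-k$ zeros into the $k+1$ gaps (before the first one, between consecutive ones, and after the last one) with the requirement that each of the $k-1$ internal gaps receives at least one zero (to separate consecutive ones), while the two end gaps may receive zero or more.

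First I would set up the gap variables. Writing $z_0$ for the number of zeros before the first one, $z_i$ ($1 \le i \le k-1$) for the number of zeros strictly between the $i$-th and $(i+1)$-th one, and $z_k$ for the number of zeros after the last one, the count $L_2(n,k)$ equals the number of integer solutions of
\[
z_0 + z_1 + \cdots + z_k = n-k, \qquad z_0 \ge 0,\ z_k \ge 0,\ z_i \ge 1 \ (1 \le i \le k-1).
\]
Then I would substitute $z_i' = z_i - 1$ for $1 \le i \le k-1$ to homogenize the constraints, turning the equation into $z_0 + z_1' + \cdots + z_{k-1}' + z_k = n-k-(k-1) = n-2k+1$ with all variables nonnegative. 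By the standard stars-and-bars formula, the number of such solutions is $\binom{(n-2k+1)+(k+1)-1}{(k+1)-1} = \binom{n-k+1}{k}$, which is the claimed value.

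One should also sanity-check the degenerate cases to make sure the formula and the argument agree: for $k=0$ the (empty) count is $1 = \binom{n+1}{0}$, and for $k=1$ it is $n = \binom{n}{1}$; when $2k-1 > n$ there is no room and the binomial coefficient $\binom{n-k+1}{k}$ is correctly zero. I do not expect a genuine obstacle here — the only thing to be careful about is the bookkeeping of which gaps are "free" (the two ends) versus "forced to be nonempty" (the $k-1$ internal ones), so that the shift is applied to exactly the right variables; once that is set up correctly the binomial identity is immediate.
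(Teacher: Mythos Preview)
Your argument is correct. It differs in presentation from the paper's proof, though both are standard. The paper constructs an explicit bijection: to each admissible binary string it associates, for each of the $k$ ones, the number of zeros preceding that one; this yields $k$ distinct elements of $\{0,1,\ldots,n-k\}$ (distinct because any two successive ones are separated by at least one zero), and conversely every $k$-subset of $\{0,1,\ldots,n-k\}$ arises this way. Your gap/stars-and-bars method instead counts compositions of $n-k$ into $k+1$ nonnegative parts with the $k-1$ internal parts positive, then shifts to reduce to an unconstrained count. The bijection is slightly slicker and avoids the case analysis on end versus internal gaps, while your approach has the advantage of being the template already used in the paper's proof of Lemma~\ref{lnk} (where the same gap variables $x_0,\ldots,x_{n-k}$ appear with the bound $x_i\le 2$); either route is perfectly adequate here.
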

 \begin{proof}
 As in Lemma~\ref{lnk}, we give a binary representation for a particular choice of $k$ objects that conforms to the conditions specified in the definition of $L_2(n,k)$. For each $1$ in this binary representation we count the total number of $0$'s preceding this $1$. So if we have $k$ $1$'s then we get $k$ numbers from $\{0,1,\ldots,n-k\}$ and all these are distinct since there should be atleast one $0$ between any two successive $1$'s. Thus corresponding to each choice of $k$ objects of the desired type we get a unique set of $k$ distinct numbers from $\{0,1,\ldots,n-k\}$. Conversely each choice of $k$ distinct numbers from $\{0,1,\ldots,n-k\}$ gives us a unique choice of $k$ objects from $n$ linearly arranged objects satisfying the specified condition. Thus we get a one-to-one correspondence between the $k$-element subsets of $\{0,1,\ldots,n-k\}$ and the choices of $k$ objects as specified in the definition of $L_2(n,k)$.  Hence $L_2(n,k)=\tbinom{n-k+1}{k}$.
 \end{proof}
 \begin{lemma}
 $L_1(n,k)= \sum\limits_{\ell=0}^{k}L_2(\lfloor\frac{n}{2}\rfloor,\ell)L_2(\lceil\frac{n}{2}\rceil,k-\ell)$.  \label{l1nk}
 \end{lemma}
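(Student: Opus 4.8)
The statement to prove is $L_1(n,k)= \sum_{\ell=0}^{k}L_2(\lfloor n/2\rfloor,\ell)L_2(\lceil n/2\rceil,k-\ell)$, where $L_1(n,k)$ counts selections of $k$ objects from $n$ linearly arranged objects with no two chosen objects in alternate (i.e.\ distance-two) positions, and $L_2(m,j) = \binom{m-j+1}{j}$ counts selections of $j$ objects from $m$ linearly arranged objects with no two consecutive.

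Here is a plan for the proof.

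\textbf{Proof proposal.} The key observation is that the ``no two alternate positions'' constraint on the linear arrangement $1,2,\dots,n$ decouples completely when we split the positions by parity. Two positions $i$ and $i+2$ are ``alternate'' precisely when they have the same parity and are adjacent within the subsequence of positions of that parity. So I would first partition $\{1,\dots,n\}$ into the odd-indexed positions $O = \{1,3,5,\dots\}$ and the even-indexed positions $E = \{2,4,6,\dots\}$. Listing $O$ in increasing order gives $\lceil n/2\rceil$ positions arranged linearly, and listing $E$ in increasing order gives $\lfloor n/2\rfloor$ positions arranged linearly; moreover, consecutive elements of $O$ (respectively $E$) in this linear order are exactly the pairs $\{i,i+2\}$ within that parity class. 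A forbidden configuration (two chosen objects two apart) occurs within $O$ or within $E$ and nowhere across the two classes, so a valid selection of $k$ objects is exactly a pair $(A,B)$ with $A\subseteq O$, $B\subseteq E$, $|A|+|B|=k$, where $A$ has no two consecutive elements in the linear order on $O$ and $B$ has no two consecutive elements in the linear order on $E$.

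\textbf{Second step.} Having established that bijection, I would count: if $|B| = \ell$ (so $|A| = k-\ell$), the number of valid $B$ is $L_2(\lfloor n/2\rfloor,\ell)$ by the definition of $L_2$ applied to the $\lfloor n/2\rfloor$ linearly arranged even positions, and the number of valid $A$ is $L_2(\lceil n/2\rceil,k-\ell)$ applied to the $\lceil n/2\rceil$ linearly arranged odd positions. Since the choices of $A$ and $B$ are independent, the number of valid selections with $|B|=\ell$ is the product $L_2(\lfloor n/2\rfloor,\ell)\,L_2(\lceil n/2\rceil,k-\ell)$. Summing over all admissible $\ell$ from $0$ to $k$ gives the claimed formula. (One may optionally substitute $L_2(m,j)=\binom{m-j+1}{j}$ from Lemma~\ref{l2nk} to write it in closed binomial form, but that is not needed for the identity as stated.)

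\textbf{Main obstacle.} The only genuinely delicate point is making airtight the claim that ``alternate positions'' in the problem means exactly ``consecutive within a parity class'' and that there are no additional forbidden pairs straddling $O$ and $E$ — in other words, confirming that the constraint graph on $\{1,\dots,n\}$ whose edges join $i$ to $i+2$ is precisely the disjoint union of a path on the odd positions and a path on the even positions. This is elementary but should be stated carefully, since the whole argument is the remark that the constraint is ``separable by parity.'' A minor edge case worth a sentence: when $n$ is small (e.g.\ $n\le 2$) there are no distance-two pairs at all, and the formula degenerates correctly since $L_2(m,0)=1$ and $L_2(1,1)=1$, $L_2(0,0)=1$; I would check that the parity split still gives the right counts there. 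Everything else is routine bookkeeping with the product rule and the definition of $L_2$.
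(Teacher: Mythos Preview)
Your proposal is correct and follows essentially the same approach as the paper: split the $n$ positions by parity into the $\lceil n/2\rceil$ odd positions and the $\lfloor n/2\rfloor$ even positions, observe that the ``no alternate pair'' constraint becomes a ``no consecutive pair'' constraint within each parity class with no cross-class constraints, and then sum the product $L_2(\lfloor n/2\rfloor,\ell)L_2(\lceil n/2\rceil,k-\ell)$ over $\ell$. Your write-up is somewhat more careful than the paper's in spelling out the decoupling and the edge cases, but the underlying argument is identical.
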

 \begin{proof}

Consider $n$ linearly arranged objects. Choosing $k$ objects from these $n$ objects  such that no two objects are from alternate positions can be done as follows. First choose  $\ell$ objects from $\lceil\frac{n}{2}\rceil$ objects in the odd positions such that no two objects are consecutive among these $\lceil\frac{n}{2}\rceil$ objects.  This can be done in $L_2(\lceil\frac{n}{2}\rceil,\ell)$ ways.  Now choose $k-\ell$ objects from the remaining  $\lfloor\frac{n}{2}\rfloor$ objects in the even positions, such that no two objects are consecutive among these $\lfloor\frac{n}{2}\rfloor$ objects.  This can be done $L_2(\lfloor\frac{n}{2}\rfloor ,k-\ell)$ ways.  Hence
 $L_1(n,k)= \sum\limits_{\ell=0}^{k}L_2(\lceil\frac{n}{2}\rceil,\ell)L_2(\lfloor\frac{n}{2}\rfloor,k-\ell)$.
 \end{proof}

\begin{lemma}
$L_1(n,k)=\sum\limits_{\ell=0}^{k}\tbinom{\lfloor\frac{n}{2}\rfloor-\ell+1}{\ell}\tbinom{\lceil\frac{n}{2}\rceil-(k-\ell)+1}{k-\ell}$.
\end{lemma}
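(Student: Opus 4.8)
The plan is to substitute the closed form for $L_2$ from Lemma~\ref{l2nk} directly into the summation formula for $L_1(n,k)$ from Lemma~\ref{l1nk}. Lemma~\ref{l1nk} gives
\[
L_1(n,k)= \sum_{\ell=0}^{k} L_2\!\left(\Big\lfloor\tfrac{n}{2}\Big\rfloor,\ell\right) L_2\!\left(\Big\lceil\tfrac{n}{2}\Big\rceil,k-\ell\right),
\]
and Lemma~\ref{l2nk} gives $L_2(m,j)=\binom{m-j+1}{j}$ for arbitrary $m$ and $j$. Applying this with $m=\lfloor n/2\rfloor$, $j=\ell$ in the first factor and $m=\lceil n/2\rceil$, $j=k-\ell$ in the second factor immediately yields
\[
L_1(n,k)=\sum_{\ell=0}^{k}\binom{\lfloor\frac{n}{2}\rfloor-\ell+1}{\ell}\binom{\lceil\frac{n}{2}\rceil-(k-\ell)+1}{k-\ell},
\]
which is exactly the claimed identity.

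Concretely, I would carry out the following steps in order. First, I would restate Lemma~\ref{l1nk} to have the two-fold product of $L_2$-values in front of me (noting that the order of the two factors in the product is immaterial, since multiplication commutes, so the $\lceil\cdot\rceil$/$\lfloor\cdot\rfloor$ assignment can be written either way). Second, I would invoke Lemma~\ref{l2nk} to replace $L_2(\lfloor n/2\rfloor,\ell)$ by $\binom{\lfloor n/2\rfloor-\ell+1}{\ell}$ and $L_2(\lceil n/2\rceil,k-\ell)$ by $\binom{\lceil n/2\rceil-(k-\ell)+1}{k-\ell}$. Third, I would substitute these two expressions term-by-term inside the sum over $\ell$ from $0$ to $k$ and observe that the resulting expression is literally the right-hand side of the statement, completing the proof.

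There is essentially no obstacle here: the lemma is a purely formal corollary obtained by chaining the two preceding lemmas, and the only thing to be careful about is bookkeeping with the floor and ceiling functions — making sure that the argument $\lfloor n/2\rfloor$ is paired with the binomial coefficient indexed by $\ell$ and $\lceil n/2\rceil$ with the one indexed by $k-\ell$, in agreement with whichever ordering Lemma~\ref{l1nk} is stated in. (One should also note the harmless convention that $L_2(m,j)=0$, equivalently $\binom{m-j+1}{j}=0$, when $j$ exceeds the maximum number of non-consecutive objects fitting among $m$ linearly arranged ones, so that the truncation of the sum at $\ell=k$ is legitimate and no out-of-range terms contribute.) Thus the proof is a one-line substitution, and I would present it as such.
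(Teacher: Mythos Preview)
Your proposal is correct and matches the paper's approach exactly: the paper's entire proof is the single line ``The proof follows from Lemma~\ref{l1nk} and Lemma~\ref{l2nk},'' and you have carried out precisely that substitution. Your remark about the harmless commutativity of the two factors (so that the floor/ceiling pairing can be written either way) is also apt, since the paper itself swaps the order between the statement and the proof of Lemma~\ref{l1nk}.
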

\begin{proof}
The proof follows from Lemma \ref{l1nk} and Lemma \ref{l2nk}.
\end{proof}
  \begin{lemma}\label{c1nk}
 $R_1(n,k)=L_1(n-2,k)+2L_1(n-5,k-1)+3L_1(n-6,k-2)$, $n \ge 6, k \ge 2$.
 \end{lemma}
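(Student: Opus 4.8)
The plan is to mimic the structure of the proof of the circular formula $R(n,k)=L(n-1,k)+2L(n-4,k-2)+L(n-3,k-1)$, but now with the "no two alternate positions chosen" condition (which translates to the forbidden binary pattern $101$) instead of "no three consecutive". So I would take $n$ circularly arranged objects $v_1,\dots,v_n$ and split the admissible selections of $k$ objects according to the local configuration around the "seam" between $v_n$ and $v_1$ — more precisely, according to which of $v_{n-1},v_n,v_1$ are chosen — so that after deleting a suitable stretch of positions containing the seam, the remaining positions form a linear arrangement on which the count is exactly some $L_1(\cdot,\cdot)$.

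The subtlety, and the reason the formula has the particular shape $L_1(n-2,k)+2L_1(n-5,k-1)+3L_1(n-6,k-2)$, is that the forbidden pattern for $R_1$/$L_1$ is $101$, which spans three positions with a gap, so the seam interacts with positions two steps away. I would enumerate cases by the chosen subset of $\{v_{n-2},v_{n-1},v_n,v_1,v_2\}$ consistent with the $101$-avoidance, both around the seam and accounting for the wrap-around constraints involving $v_{n-1}v_1$ and $v_{n-2}v_n$ and $v_nv_2$. The case "$v_n$ not chosen" should contribute $L_1(n-1,k)$ at first glance, but because alternate-position constraints still wrap around ($v_{n-1}$ with $v_1$), one has to be careful: removing only $v_n$ does not fully linearize the problem, since $v_{n-1}$ and $v_1$ are still "alternate". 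This is the main obstacle — getting a clean linear residual problem requires removing \emph{two} consecutive seam positions, not one, which is exactly why $n-2$, $n-5$, $n-6$ (rather than $n-1$, $n-4$, $n-3$) appear. I would organize the cases so that in each case exactly two cyclically consecutive positions straddling the seam are deleted after the local choices near them are fixed, leaving a genuine linear run.

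Concretely I would argue: condition on whether $v_{n-1}$ and $v_1$ (the two positions alternate to the deleted seam pair $v_n,v_{n-2}$ — or whichever pair we delete) are chosen, track the induced constraints they impose one position further in (on $v_{n-3}$, $v_3$, etc.), and in each case reduce to a linear count $L_1$ of the appropriate reduced size with the appropriate reduced number of remaining $1$'s. Summing the cases: the three $v_n$-not-chosen-ish cases collapse to $L_1(n-2,k)$, the two cases with exactly one of the relevant seam vertices chosen give $2L_1(n-5,k-1)$, and the three remaining admissible patterns with two extra chosen vertices near the seam give $3L_1(n-6,k-2)$. The coefficients $1,2,3$ are precisely the number of admissible local patterns of each "weight" around the seam that avoid $101$ cyclically. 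Once the case split is set up correctly, each individual count is immediate from the definition of $L_1$, so no heavy computation is needed; the entire difficulty is bookkeeping the cyclic $101$-avoidance at the seam.

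I would also remark that the hypotheses $n\ge 6$ and $k\ge 2$ are there to ensure the reduced sizes $n-2,n-5,n-6$ are nonnegative and the case split is non-degenerate (the three seam positions $v_{n-2},v_{n-1},v_n,v_1,v_2$ being genuinely five distinct vertices requires $n\ge 6$), and that $L_1$ is taken to be $0$ on negative arguments so that the small-$k$ edge cases are absorbed automatically.
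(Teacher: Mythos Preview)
Your approach is essentially the paper's: cut the cycle at two \emph{adjacent} seam positions (so that all wrap-around ``alternate'' pairs are killed), casework on the local pattern there, and reduce each case to an $L_1$ count on the remaining linear run. The paper does exactly this, conditioning on the pair $(v_{n-1},v_n)$ and, when one of them is chosen, subcasing once more on the next vertex in; this yields Type~I $(0,0)\to L_1(n-2,k)$, Types~II/III (exactly one of $v_{n-1},v_n$ chosen) each splitting into $L_1(n-5,k-1)+L_1(n-6,k-2)$, and Type~IV $(1,1)\to L_1(n-6,k-2)$, which sums to the stated formula.

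One correction to your bookkeeping: the contribution $L_1(n-2,k)$ comes from a \emph{single} case (both $v_{n-1}$ and $v_n$ unchosen), not three that ``collapse''. Your instinct to look at a five-vertex window $\{v_{n-2},v_{n-1},v_n,v_1,v_2\}$ is more than is needed and is what makes your accounting fuzzy; conditioning on the adjacent pair $(v_{n-1},v_n)$ already severs every cyclic $101$ constraint, and the forced zeros two steps away then appear automatically in the subcases where a seam vertex is chosen. With that tidier split the coefficients $1,2,3$ fall out as $1$, $2\cdot 1$, and $2\cdot 1+1$, matching your intuition about counting admissible local patterns of each weight.
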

 \begin{proof}
Let the  $n$ circularly arranged objects be $v_1, \ldots,v_n$. The set of all choices of $k$ objects such that no two objects occupy alternate positions can be divided in to various types.
\begin{enumerate}[Type I:]
\item Both $v_n$ and $v_{n-1}$ are not chosen. In this case the total number of choices is $L_1(n-2,k)$ (See Figure 2).

\begin{figure}[H]
\psset{unit=.8}\label{fig2}
\begin{pspicture}(3,3)(-7.9,-1)
    \psdot(1,1)
 \psdot[dotstyle=o,dotsize=12pt](2,2)
 \psdot(2,2)
  \psdot(3,3)
   \psdot(4,2)
   \psdot[dotstyle=o,dotsize=12pt](3,3)
   \psdot(3,3)
  \uput[u](3,3.1){$v_n$}
 \psdot(4,2)
  \uput[r](4.1,2){$v_1$}
 \psdot(2,2)
  \uput[r](2.1,2){$v_{n-1}$}
    \psdot(5,1)
    \uput[r](5,1){$v_2$}
    \uput[r](1,1){$v_{n-2}$}
     \psdot(5,-1)
     \psdot(1,-1)
\uput[u](3,-1){$L_1(n-2,k)$}
   \psline[](1,1)(2,2)(3,3)(4,2)
   \psline[linestyle=dotted,dotsep=2pt](4,2)(5,1)(5,-1)(1,-1)(1,1)
\end{pspicture}
\caption{}
\end{figure}


\item $v_n$ is selected and $v_{n-1}$ is not selected. $v_n$ is selected implies $v_{n-2}$ and $v_2$ are not selected. The number of choices where $v_1$ is selected is $L_1(n-6,k-2)$ and the number of choices where $v_1$ is not selected is $L_1(n-5,k-1)$. Hence the total number of such choices is $L_1(n-6,k-2)+L_1(n-5,k-1)$.
\item $v_n$ is not selected and $v_{n-1}$ is selected. As in the previous case the total number of such choices is $L_1(n-6,k-2)+L_1(n-5,k-1)$.
\item Both $v_n$ and $v_{n-1} $are selected. $v_n$ and $v_{n-1} $ are selected implies $v_1,v_2,v_{n-2}$ and $v_{n-3}$ are not selected. Therefore the number of choices of this type is $L_1(n-6,k-2)$.
\end {enumerate}
Hence $R_1(n,k)=L_1(n-2,k)+2L_1(n-5,k-1)+3L_1(n-6,k-2)$.
\end{proof}
Now it is easy to see that
\begin{eqnarray*}
R_1(n,k) & = & 1,  \text{ when } k=0\\
          & = & n, \text{ when } k =1\text{ or }k=2\text{ and }n=4\text{ or }5\\
                 & = & 0,  \text{ when } k \ge 3, ~ n=4\text{ or }5 \\
          \end{eqnarray*}
 Thus we have determined $R_1(n,k)$ for all $n \ge 4$ and $k \ge 0$.

Now with the help of   Theorem~\ref{odd} and Corollary~\ref{oddcent}, we have the center number of the odd cycle $C_{2n+1}$, $n \ge 2$.

\begin{theorem}
The center number of the odd cycle $C_{2n+1}$, $n \ge 2$, is $\sum\limits_{k=1} ^ {n} R_1(2n+1,k)+1$.
\end{theorem}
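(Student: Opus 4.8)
The plan is to combine the characterization of center sets in Theorem~\ref{odd} with the cardinality restriction in Corollary~\ref{oddcent} and then merely count. First I would recall that, by Theorem~\ref{odd}, the center sets of $C_{2n+1}$ are exactly $V$ itself together with all (nonempty) subsets $A\subseteq V$ that contain no pair of alternate vertices. By Corollary~\ref{oddcent}, any center set other than $V$ has cardinality at most $n$ (equivalently, the ``alternate'' relation on $C_{2n+1}$ is again a cycle on $2n+1$ vertices since stepping by two around an odd cycle visits every vertex, and an independent set in a $(2n+1)$-cycle has at most $n$ vertices). Hence every center set has cardinality either in $\{1,\dots,n\}$ or exactly $2n+1$, the latter happening only for $A=V$. (The empty set, although vacuously alternate-pair-free, is not a center set, since $C_S(G)$ is the $\mathrm{argmin}$ of a function on the finite nonempty set $V$ and is therefore nonempty; so the count legitimately starts at $k=1$.)

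Next, for each $k$ with $1\le k\le n$, I would observe that a $k$-element subset of $V(C_{2n+1})$ is a center set if and only if it contains no pair of alternate vertices: ``if'' is immediate from Theorem~\ref{odd}, because such a set is $\neq V$ as $k\le n<2n+1$; ``only if'' is Theorem~\ref{odd} together with the cardinality remark just made, which forces a size-$k$ center set ($k\le n$) to be of the alternate-pair-free type rather than $V$. By the very definition of $R_1$, the number of alternate-pair-free $k$-subsets of a circular arrangement of $2n+1$ objects is $R_1(2n+1,k)$; since $2n+1\ge 5\ge 4$, this is one of the values already determined via Lemma~\ref{c1nk} and its boundary cases. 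Distinct subsets are distinct center sets, and the converse assignment (a size-$\le n$ center set recovers itself) makes this a bijection between center sets of size $k$ and alternate-pair-free $k$-subsets.

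Finally I would add up: the center sets of $C_{2n+1}$ split, with no overlap, into the single set $V$ of size $2n+1$ and, for each $k=1,\dots,n$, the $R_1(2n+1,k)$ center sets of size $k$; the set $V$ is not double-counted among sizes $1,\dots,n$ precisely because $2n+1>n$ for $n\ge 2$. Therefore $cn(C_{2n+1})=\sum_{k=1}^{n}R_1(2n+1,k)+1$, as claimed.

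There is essentially no hard step here; all the substance lives in Theorem~\ref{odd} and Corollary~\ref{oddcent}. The only points needing a little care are (i) checking that $V$ is the unique center set escaping the range $1\le k\le n$, which uses $2n+1>n$ (true for all $n\ge1$, a fortiori for $n\ge2$), and (ii) making sure that, since we are counting subsets of the vertex set of a \emph{cycle} (a circular arrangement of $2n+1\ge5$ objects), the relevant counting function is $R_1$ and not the linear analogue $L_1$.
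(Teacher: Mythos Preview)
Your proposal is correct and follows exactly the approach the paper indicates: invoke Theorem~\ref{odd} for the characterization, Corollary~\ref{oddcent} for the size bound, and then count alternate-pair-free $k$-subsets via the definition of $R_1$. Your write-up is in fact considerably more careful than the paper's (which gives no explicit proof), in particular in disposing of the empty set and in checking that $2n+1\ge 4$ so that $R_1$ applies.
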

\section{Conclusion}
In this paper we could identify and characterize the center sets of many classes of graphs. 
We proved that the induced subgraphs of center sets of block graphs, wheel graphs etc are all connected. Characterizing the class of graphs for which the induced subgraphs of all center sets are connected is another problem that will be interesting. We also determined the center number of some classes of graphs, which also need to be extended. 
\bibliographystyle{plain}
\bibliography{ram-biblio}

\end{document}